\newcolumntype{L}{>{\centering\arraybackslash}m{3cm}}
\newtheorem{theorem}{Theorem} 
\newtheorem{lemma}[theorem]{Lemma} 
\newtheorem{corollary}[theorem]{Corollary}
\newtheorem{definition}[theorem]{Definition}
\newcommand{\methodname}{\textsc{SparSto}\xspace} 
\newcommand{\methodCOS}{\textsc{R1oTrott}\xspace}
\newcommand{\ket}[1]{\ensuremath{\vert#1\rangle}}
\def\>{\rangle} 
\def\<{\langle}
\newcommand{\expectation}[1]{\mathbb E \left({#1}\right)}  
\newcommand{\expectationinline}[1]{\mathbb E ({#1})}
\def\floor#1{{\left \lfloor {#1} \right \rfloor  }}
\def\ceil#1{{\left  \lceil  {#1} \right \rceil   }}
\begin{document}

\title{Compilation by stochastic Hamiltonian sparsification}

\author{Yingkai Ouyang}
\affiliation{%
 Department of Physics and Astronomy, University of Sheffield, Sheffield, UK
}%
\orcid{0000-0003-1115-0074}
\email{y.ouyang@sheffield.ac.uk}

\author{David R. White}
\email{d.r.white@sheffield.ac.uk}   
\affiliation{%
 Department of Physics and Astronomy, University of Sheffield, Sheffield, UK
}%
\orcid{0000-0001-6317-4348}

\author{Earl T. Campbell}
\affiliation{%
 Department of Physics and Astronomy, University of Sheffield, Sheffield, UK
}%
\affiliation{%
 Riverlane, Cambridge, UK
}%
\email{earltcampbell@gmail.com}

\maketitle

\begin{abstract}
 Simulation of quantum chemistry is expected to be a principal application of quantum computing. In quantum simulation, a complicated Hamiltonian describing the dynamics of a quantum system is decomposed into its constituent terms, where the effect of each term during time-evolution is individually computed. For many physical systems, the Hamiltonian has a large number of terms, constraining the scalability of established simulation methods. To address this limitation we introduce a new scheme that approximates the actual Hamiltonian with a sparser Hamiltonian containing fewer terms. By stochastically sparsifying  weaker Hamiltonian terms, we benefit from a quadratic suppression of errors relative to deterministic approaches. Relying on optimality conditions from convex optimisation theory, we derive an appropriate probability distribution for the weaker Hamiltonian terms, and compare its error bounds with other probability ansatzes for some electronic structure Hamiltonians. Tuning the sparsity of our approximate Hamiltonians allows our scheme to interpolate between two recent random compilers: qDRIFT and randomized first order Trotter.  Our scheme is thus an algorithm that combines the strengths of randomised Trotterisation with the efficiency of qDRIFT, and for intermediate gate budgets, outperforms both of these prior methods.
\end{abstract}

\section{Introduction}
Chemistry simulation is expected to be a principal application of quantum computing, revealing the properties of chemical bonds and interactions by simulating classically intractable systems, with applications in pharmaceuticals, material science, and industrial chemical manufacture \cite{AspuruGuzik2005,mccardle2018}. For example, efficient simulation of the chemical cluster FeMoCo \cite{Beinert1997,Reiher2017,babbush-1902.02134} may allow more efficient nitrogen fixation, improving the manufacture of fertilizers.
The difficulty of directly solving the Schr\"{o}dinger equation of chemically interesting problems renders even moderately complex systems classically intractable. Using quantum mechanics to simulate quantum systems \cite{Lloyd1996} will provide unprecedented detail in the solution to chemical problems, enabling us to better understand the dynamics of highly entangled systems, predicting their properties and chemical reactions \cite{AspuruGuzik2005,PhysRevX.8.031022-trapped-ion}. For example, simulation can be used in phase estimation to extract the eigenspectrum of a Hamiltonian \cite{Whitfield2011,wiebe-1907.10070,PhysRevX.6.031007-scalable-qsim-of-molecular-energies}, and by sufficiently understanding its energy spectra we can accurately predict chemical reaction rates \cite{PhysRevLett.122.090504-qsim-h2o-transitions}.

Given a Hamiltonian $H$ expressed as the sum of multi-qubit Pauli matrices, solution of the Schr\"{o}dinger equation requires the calculation of its exponentiation.  Quantum simulation techniques are distinguished by the way they map 
the chemical Hamiltonian to an effective Hamiltonian on qubits \cite{babbush-1909.00028,Setia2018-BrK-simulation,PhysRevLett.120.110501-lineardepth,PhysRevX.8.011044-lowdepth,PhysRevX.8.041015-Tcomplexity}, and subsequent mapping of the exponentiation of this effective Hamiltonian to a computation. One well-established method is to apply the Trotter-Suzuki formula \cite{Suz76,Suzuki1990,Suzuki1991} to reduce this larger exponentiation to a product of Pauli exponentials. 
Each Pauli exponential can then be interpreted as a single quantum gate. 
To date, many variants of Trotter-Suzuki methods have been studied in the context of quantum simulation of chemical systems \cite{Babbush2015-PhysRevA.91.022311,babbush-1902.10673,Somma2016-TrotterSuzukiJMP}.

Trotterisation is an attractive approach to quantum simulation because of its simplicity. However, a significant limitation of Trotterisation is that the number of quantum gates required scales linearly with the number of terms in the Hamiltonian, which may grow very large \cite{Whitfield2011,Wecker2014-PhysRevA.90.022305}. 
Campbell \cite{campbell2018random} observed the problematic scaling of the Trotter-Suzuki decomposition and introduces qDRIFT, a stochastic approach that samples terms from the Hamiltonian, trading accuracy for computational cost; importantly, qDRIFT scales independently of the number of terms. We build on Campbell's approach, introducing an algorithm that combines the strengths of standard Trotterisation with the efficiency of qDRIFT.

Our approach approximates the target Hamiltonian via sparsification, yielding a Hamiltonian with fewer terms; the Trotterised computation then requires far fewer gates per Trotter step. Although sparsification introduces a new approximation error, the reduction in gate count means we can apply more Trotter steps within a fixed budget. Our key insight is that sparsification can be performed stochastically instead of deterministically, and that this leads to improved performance. By defining our approximate Hamiltonian to be a random variable with an expectation value equal to the actual Hamiltonian, we benefit from a quadratic suppression of errors relative to deterministic methods; this behaviour is also seen in other stochastic compilers \cite{campbell2017mixing,campbell2018random,childs2018faster,HastingsML}.  We refer to our combination of first order Trotterisation and stochastically sparsified Hamiltonians as \methodname.

In many systems, including electronic structure Hamiltonians, we empirically observe power-law distributions of term strengths, which is promising for sparsification since weaker terms are clear candidates for truncation.  In a stochastic compiler, it is natural to relate the probability of a term being truncated from the Hamiltonian with the magnitude of its strength.  One of our main technical results is a rigorous upper bound on the error of \methodname for an arbitrary probability distribution where the terms are sampled independently. 

To obtain the best possible upper-bound we need to select the best probability distribution. In our analysis, we place the most important terms, with the largest strengths, inside an \emph{active} set so that they always appear in the sparsified Hamiltonian.  Random sparsification is instead applied only to a tail of weaker Hamiltonian terms that we label the \emph{inactive} set. We rely on optimality conditions in convex optimisation theory in order to derive a probability distribution over the inactive set, which we call the ``linear ansatz''.  We numerically optimise and analyse the performance of our error bounds for some electronic structure Hamiltonians.  For low gate budgets \methodname behaves similarly to qDRIFT, and for larger gate budgets it exactly reproduces randomized first order Trotter; as such it interpolates between these approaches.  However, for intermediate gate budgets, which coincide with parameter regimes of practical interest, our new sparsification method outperforms both methods by around an order of magnitude.  We emphasize numerical optimisation is always performed at the level of upper bounds and not by considering empirical performance of small simulatable systems.  Though \methodname uses first order Trotter, sparsification of Hamiltonians could also be naturally combined with higher order Trotter schemes to yield higher order randomized compilers.

When comparing with other Trotter methods we can just count the number of gates of the form $e^{-i s H_j}$ that are unitaries generated by easily accessible Hamiltonians $H_j$. However, to fairly compare against post-Trotter methods \cite{ChildsBerry2012,Low2017,BERRY2017,babbush-1902.02134,Low2019-qubitization}, we would also need to assess the resource cost of extra ancilla and select and prepare gadgets that are not built using $e^{-i s H_j}$.  This makes direct comparison with post-Trotter methods an involved task and sensitive to the cost model.  However, post-Trotter methods have better asymptotic scaling and for problems of interest they have often been found to have a considerable advantage over Trotterisation methods, and it is likely that this advantage persists over \methodname.  However, the value of this work is two-fold.  Firstly, we improve the best known Trotter methods.  Secondly, we have demonstrated the value of randomly sparsifying Hamiltonians as a technique that might later be incorporated into post-Trotter methods.  Indeed, Berry's perspective article \cite{berry2019random} also highlighted that potential application of randomization to post-Trotter protocols is a promising future research direction.

\section{Trotterisation}

Consider a time-independent Hamiltonian that admits a decomposition $	H = \sum_{j=1}^L h_j P_j.$
While $P_j$ can often be considered to be multi-qubit Pauli matrices, we make no such assumption, and for us 
$P_j$ are matrices with singular value at most 1. Without loss of generality, the corresponding coefficients $h_j$ can then be positive.  Solving the Schr\"odinger equation $ \ket{\psi(t)} = e^{-iHt} \ket{\psi(0)}$ allows us to model the continuous evolution of the state $\ket{\psi(t)}$ over time.
Over a short time period $s$,
the first order Trotter-Suzuki decomposition approximates the exponential $e^{-iHt}$ with a product of exponentials given by \cite{Suz76}
\begin{equation}
    \prod_{j=1}^L e^{-i s H_j} = e^{-i s H} + \mathcal{O}(s^2)
    \label{eq:trotter},
\end{equation} 
where $H_j = h_j P_j$.
We call this the ``vanilla Trotterisation'' scheme, as there are many variants to this approach \cite{Suzuki1990,Suzuki1991,childs2018toward,childs2018faster},
including ones that propose to deterministically coalesce Hamiltonian terms \cite{coalescing-PhysRevA.90.022305-1312.1695,coalescing-QIC-1406.4920}.
We assume that each $e^{-i s H_j}$, which we call a gate, can be efficiently implemented on a target quantum computer.
To simulate $e^{-itH}$, we approximate
$e^{-isH}$ repeatedly $r$ times, such that $t=rs$.
The number of gates $G$ required by a simulation is thus the principal measure of its computational cost. 
Since each Trotterisation of $e^{-isH}$ involves $L$ gates, 
vanilla Trotterisation requires $G=rL$ gates, and can become potentially computationally expensive when $L$ is large. 
In particular, it is known that an effective Hamiltonian for the electronic structure problem for a system with $N$ modes typically has $L= \mathcal{O}(N^4)$ terms \cite{AspuruGuzik2005}. 



The vanilla Trotterisation scheme approximates $e^{-itH}$ with a simulation error that is at most
$    \epsilon_{\rm van} 
\lesssim   \lambda^2 t^2 / 2r
= L \lambda^2 t^2 / 2G $ \cite{Suz76},
where $\lambda = \|{\bf h}\|_1$, ${\bf h}
=  (h_1,\dots, h_L)$.
In contrast to Trotterisation, the qDRIFT method introduced by Campbell has a computational cost that is independent of $L$; its gate count is $\mathcal{O} (   \lambda^2 t^2 / \epsilon)$. qDRIFT simulates an ideal unitary process by a Markovian evolution, sampling a sequence of Pauli gates from a predetermined distribution; each exponentiation in the resulting circuit is given the same weight $\tau$ such that the distribution alone determines the outcome of the calculation. The probability $p_j$ of choosing a given $e^{i\tau H_j}$ as the next gate in a computation is weighted by the corresponding interaction strength: $p_j = h_j / \lambda$, ensuring that the stochastic process of repeated sampling drifts stochastically towards the target unitary. The number of gates is set at a fixed computational budget $G$ representing the number of primitive gates, and gives approximation error $\epsilon \lesssim 4\lambda^2t^2 /G$ \footnote{In \cite{campbell2018random}, the simulation error used is the diamond distance, which differs from the diamond norm of the difference of channels by a factor of 2. This explains why the bound in \cite{campbell2018random} is approximately at most $2\lambda^2t^2 /G$ instead of $4\lambda^2t^2 /G$}.

Whilst Trotter-Suzuki decompositions have worse computational complexity than qDRIFT in the number of terms, their computational cost scales better with respect to $t$ and $\epsilon$. To exploit this trade-off, we introduce a new approach \methodname, which interpolates between qDRIFT and the higher-order Trotter-Suzuki decompositions whilst also building and improving on the analysis of randomised simulation in Ref.~\cite{childs2018faster}.

Like qDRIFT, \methodname approximates a unitary evolution with a probabilistic ensemble of unitary evolutions instead of direct Trotterisation --- although higher-order Trotterisation can subsequently be applied. This stochastic scheme is in the spirit of related work \cite{childs2018faster,childs2018toward,berry2019time} and its merits lie in the ability to use mixtures of unitary operators to approximate a unitary operator \cite{campbell2017mixing,HastingsML}; intuitively stochastic methods avoid systematic noise.

\section{\methodname Analysis}

\methodname uses a random Hamiltonian $\hat{H}$ and crucial to our analysis is that the expectation value is equal to the system Hamiltonian $\mathbb{E}(\hat{H})=H$. To reduce gate counts, we would like $\hat{H}$ to have far fewer terms than $H$ and to be a good approximation, or at least to do this with high probability. Rather than considering arbitrary probability distributions we consider the term-wise independent sampling where $\hat{H}$ contains the term $h_j P_j / p_j$ with probability $p_j$ and with probability $1-p_j$ this term is dropped.  This ensures $\mathbb{E}(\hat{H})=H$ and that the expected number of terms is $\mu=\sum_{j=1}^L p_j$.  

Next, we review Lindblad's formalism of unitary maps \cite{Lin76}, where such maps are generated by exponentiating Liouville operators. We let $\mathcal L_j = h_j \mathcal P_j$ so that $\mathcal P_j$ is a Liouville operator that maps $\rho$ to $ -i ( P_j \rho - \rho P_j)$.
Clearly then, $\mathcal L_j(\rho) =  -i ( H_j \rho - \rho H_j)$.
For any positive number $s$, Liouville operators generate unitary evolutions in the sense that
$e^{s \mathcal L_j} (\rho)  = e^{-i H_j s } \rho e^{i H_j s } .$
The ideal evolution operator can be written in terms of the Liouville operator $\mathcal L = \sum_{j=1}^L	\mathcal L_j$, because $
e^{s \mathcal L} (\rho)  = e^{-i H s } \rho e^{i H s } .$
Using a vanilla Trotterisation analogous to that given in \eqref{eq:trotter}, a first order approximation of $e^{s \mathcal L}$ is given by $\mathcal T_{s,\to} = \prod_{j=1}^L e^{s \mathcal L_j}$.
Given no \textit{a priori} reason to simulate $\mathcal L_j$ in any particular order, it is natural to consider permutations of this operator sequence. A second order approximation of the Taylor expansion can be made by mixing the above Trotterisation $\mathcal T_{s,\to}$ with $\mathcal T_{s,\gets} = \prod_{j=L}^1 e^{s \mathcal L_j} $, where the arrows denote the ordering over the term index $j$. The uniformly mixed operation $\frac{1}{2} (\mathcal T_{s,\to}  + \mathcal T_{s,\gets})$, which is a randomised first order Trotterisation scheme that we denote as \methodCOS, approximates $e^{s \mathcal L}$ with error that is third order in $s L$ \cite[Theorem 1]{childs2018faster}. More generally, Trotterisation can be further improved by completely randomising the order in which the gates are performed \cite{childs2018faster,Wecker2014-PhysRevA.90.022305}.
   To approximate $e^{t\mathcal L}$ for a fixed time $t$,
we approximate $e^{s\mathcal L}$ for a total of $r$ times, where $s = t/r$. By taking the number of repeats $r$ to be large, the simulation time $s$ can become small, and \eqref{eq:trotter} holds to a good approximation. 

In \methodname, we stochastically sparsify the Hamiltonian using some term-wise independent probability distribution and then apply one step of randomized first order Trotter.  For each of the $r$ Trotter steps a fresh stochastic Hamiltonian is sampled. The random Hamiltonian $\hat{H}$ induces a random Liouville operator $\hat{\mathcal{L}}$ in the natural way $\hat{\mathcal{L}}(\rho) =  i ( \hat{H} \rho - \rho \hat{H})$. Similarly, we have random terms $\hat{\mathcal{L}}_j$ such that
\begin{equation}
\hat{\mathcal{L}}_j  = \begin{cases}
    \mathcal{L}_j / p_j  & \mbox{ with probability } p_j \\
    0  & \mbox{ with probability } 1- p_j \\
\end{cases}
\end{equation}
Here, $\hat{\mathcal L_j}$ approximates the ideal Liouville operator $\mathcal L_j$ in the sense that 
$\mathbb{E}( \hat {\mathcal L_j} )= {\mathcal L_j}$.   Given a sampled $\hat{H}$ or $\hat{\mathcal{L}}$, we also randomize the order of the gates in each Trotter step and so introduce the randomized operators of forward and reverse Trotter steps
\begin{align}
    \hat{\mathcal T}_{s,\to} & = \prod_{j=1}^L e^{s \hat{\mathcal L}_j} \\
    \hat{\mathcal T}_{s,\gets}&  = \prod_{j=L}^1 e^{s \hat{\mathcal L}_j} .
\end{align}
A single step of \methodname is then described by
\begin{align}
\hat{\mathcal E}_s = \frac{1}{2}\left( \hat{\mathcal T}_{s,\to} + \hat{\mathcal T}_{s,\gets} \right)
\end{align}
which has $\mu$ gates on average.
To approximate $e^{t \mathcal L}$,
\methodname simulates $\hat{\mathcal E}_s$ independently and sequentially $r$ times.
By fixing the expected number of gates $G$ of \methodname to be constant,
we require in the first $\floor{G / \mu}$ repeats to have $s = \mu t / G$
and in the final repeat to have $s = t - \floor{G / \mu}$. The total number of repeats is then $r = \ceil{G / \mu}$.
We do not consider completely randomising the gate orders in $\hat {\mathcal E}_s$ because it renders our subsequent analysis overly complicated.

We quantify the maximum error of \methodname using the diamond norm \cite{kitaev2002classical},
which when evaluated on the difference between quantum channels, quantifies their distinguishability.
For us, we quantify the distinguishability between the average of $r$ repeats of $\hat{\mathcal E}_s$ and the ideal channel $e^{t \mathcal L}$ with the error
\begin{equation}
    \| \expectationinline{ \hat{\mathcal{E}}_s^r} - e^{t \mathcal L} \|_ \diamond, \label{eq:error-defi}
\end{equation}
where $\|\cdot \|_\diamond$ denotes the diamond norm.

Our main result is an analytic upper bound on $\|\hat{\mathcal E}_s -e^{s \mathcal L}\|_\diamond$ that we denote as $\epsilon$, which is expressed in terms of the 1-norm $\|\cdot\|_1$.
\begin{theorem} \label{thm:main}
Using \methodname with vector of probabilities  ${\bf p} = (p_1,\dots, p_L)$, vector of Hamiltonian coefficients ${\bf h} = (h_1,\dots, h_L)$, $L \ge 3$, and expected number of gates $G$,
the error of simulating $e^{t \mathcal L}$ with \methodname is at most $\epsilon$ where  
\begin{align}
   \epsilon =    \frac{2 t^2 \mu}{G} \| {\bf u} \|_1
   + \frac{4t^3 \mu^2}{3G^2} K
+ \mathcal O\left( \frac{t^4 \mu^3 }{G^3} \right),\notag
\end{align} 
with $K =\left( \|{\bf v}\|_1 + \lambda \| {\bf w}\|_1    + 4 \lambda^3/3  \right) $,
$\lambda = \|{\bf h}\|_1$
and $\mu = \| {\bf p} \|_1$.
Moreover, ${\bf u}, {\bf v}$ and ${\bf w}$ are vectors given by
\begin{align} 
{\bf u} = 
\left(
\left( \frac{1}{p_1} - 1 \right) h_1^2,
\dots,
 \left( \frac{1}{p_L} - 1 \right) h_L^2 \right),\notag\\
{\bf v} = 
\left(
\left( \frac{1}{p_1^2} - 1 \right) h_1^3,
\dots,
 \left( \frac{1}{p_L^2} - 1 \right) h_L^3 \right),\notag\\
{\bf w} = 
\left(
\left( \frac{3}{p_1} - 1 \right) h_1^2,
\dots,
 \left( \frac{3}{p_L} - 1 \right) h_L^2 \right).  \notag
\end{align}
\end{theorem}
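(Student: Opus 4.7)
The plan is to first reduce the $r$-step diamond-norm error to a per-step bound, and then expand both $\hat{\mathcal{E}}_s$ and $e^{s\mathcal{L}}$ as formal power series in $s$ and compare them order by order. Independence of the sampling across the $r$ Trotter steps yields $\mathbb{E}(\hat{\mathcal{E}}_s^r) = [\mathbb{E}(\hat{\mathcal{E}}_s)]^r$, so a standard telescoping argument for the diamond norm (both $\mathbb{E}(\hat{\mathcal{E}}_s)$ and $e^{s\mathcal{L}}$ are CPTP) gives $\|\mathbb{E}(\hat{\mathcal{E}}_s^r)-e^{rs\mathcal{L}}\|_\diamond \le r\,\|\mathbb{E}(\hat{\mathcal{E}}_s)-e^{s\mathcal{L}}\|_\diamond$. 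With $s = t/r$ and $r = \lceil G/\mu\rceil$, a per-step bound of order $s^k$ multiplies to an overall bound of order $t^k\mu^{k-1}/G^{k-1}$, which is exactly the scaling of the three terms in the statement.

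For the per-step Taylor expansion I would write $\hat{\mathcal{L}}_j = \xi_j \mathcal{L}_j/p_j$ with $\xi_j$ independent Bernoulli variables of parameter $p_j$. The idempotence $\xi_j^k=\xi_j$ produces the closed-form moments $\mathbb{E}(\hat{\mathcal{L}}_j^k) = \mathcal{L}_j^k/p_j^{k-1}$, and independence lets expectations of products factor across distinct indices. At orders $s^0$ and $s^1$, $\mathbb{E}(\hat{\mathcal{E}}_s)$ and $e^{s\mathcal{L}}$ agree because $\mathbb{E}(\hat{\mathcal{L}}) = \mathcal{L}$. At order $s^2$, the forward/backward symmetrization collapses the two ordered double sums into $\tfrac{1}{2}(\sum_j \hat{\mathcal{L}}_j)^2$; its expectation equals $\tfrac{1}{2}\mathcal{L}^2 + \tfrac{1}{2}\sum_j(1/p_j-1)\mathcal{L}_j^2$, and subtracting the matching $\tfrac{1}{2}\mathcal{L}^2$ of $e^{s\mathcal{L}}$ leaves exactly the sparsification ``variance''. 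Bounding $\|\mathcal{L}_j\|_\diamond \le 2h_j$ and applying the triangle inequality yields a per-step contribution of at most $2 s^2\|{\bf u}\|_1$, which becomes $(2t^2\mu/G)\|{\bf u}\|_1$ once multiplied by $r$.

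At order $s^3$ the expansion produces triple products $\hat{\mathcal{L}}_a\hat{\mathcal{L}}_b\hat{\mathcal{L}}_c$, which I would classify by the pattern of index coincidences among $\{a,b,c\}$. The three-distinct-index contributions match the corresponding part of $\tfrac{1}{6}\mathcal{L}^3$ up to the intrinsic deterministic error of \methodCOS, which by the commutator bound of Ref.~\cite[Thm.~1]{childs2018faster} produces the $\tfrac{4}{3}\lambda^3$ contribution to $K$. The pure-cube terms $\hat{\mathcal{L}}_j^3$ carry expectation $\mathcal{L}_j^3/p_j^2$ and deliver the $(1/p_j^2-1)h_j^3$ deviations collected in $\|{\bf v}\|_1$. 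The two-distinct-index terms, which combine the three orderings of a $(j,j,k)$ pattern, give after forward/backward symmetrization and subtraction of the corresponding piece of $\tfrac{1}{6}\mathcal{L}^3$ a residual coefficient of the form $(3/p_j-1)h_j^2$ acting on a free $\mathcal{L}_k$; bounding $\|\mathcal{L}_k\|_\diamond \le 2h_k$ and summing over $k$ produces the $\lambda\|{\bf w}\|_1$ piece. Assembling the three pieces yields a per-step third-order bound proportional to $s^3 K$, which becomes $(4t^3\mu^2/3G^2)K$ after $r$ steps. The remaining $s^4$ and higher contributions are controlled by a geometric tail bound on the Taylor series of the exponentials and absorbed into the $\mathcal{O}(t^4\mu^3/G^3)$ remainder.

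The main obstacle is the third-order bookkeeping: each index-coincidence pattern contributes a moment that depends on the operator ordering and must be matched against the corresponding orderings of $\tfrac{1}{6}\mathcal{L}^3$, with the combinatorial coefficients tracked carefully so that the residuals assemble into exactly the stated vectors ${\bf u}, {\bf v}, {\bf w}$. In particular, the factor $3$ in $\mathbf{w}$ arises from the three orderings $jjk, jkj, kjj$ being distributed asymmetrically between the forward Trotter, the backward Trotter, and the Taylor expansion of $\mathcal{L}^3$. Once these coefficients are in hand, the rest of the argument reduces to a routine application of the triangle inequality with the single-term bound $\|\mathcal{L}_j\|_\diamond \le 2h_j$ and submultiplicativity of the diamond norm.
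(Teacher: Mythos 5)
Your proposal follows essentially the same route as the paper's proof: telescoping to a per-step bound, term-by-term Taylor comparison using the Bernoulli moment identity $\expectationinline{\hat{\mathcal L}_j^k}=\mathcal L_j^k/p_j^{k-1}$, classification of the third-order terms by index-coincidence pattern (pure cubes giving ${\bf v}$, the $jjk$-type patterns giving the factor $3$ in ${\bf w}$, and the distinct-triple residual reducing to the deterministic \methodCOS commutator error giving $4\lambda^3/3$), and the bound $\|\mathcal L_j\|_\diamond\le 2h_j$ throughout. The only cosmetic differences are that the paper derives the distinct-triple commutator bound itself rather than citing \cite{childs2018faster}, and controls the tail via an integral-remainder (fundamental theorem of calculus) argument rather than a geometric tail bound, but both suffice for the stated $\mathcal O(t^4\mu^3/G^3)$ remainder.
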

A tighter bound with full details on the higher order terms in $\epsilon$ is supplied in 
Theorem~\ref{thm:main-complete} of Appendix \ref{app:order2,3-anaylsis}.

\begin{figure*}[htbp]
\centering
\includegraphics[width=\textwidth]{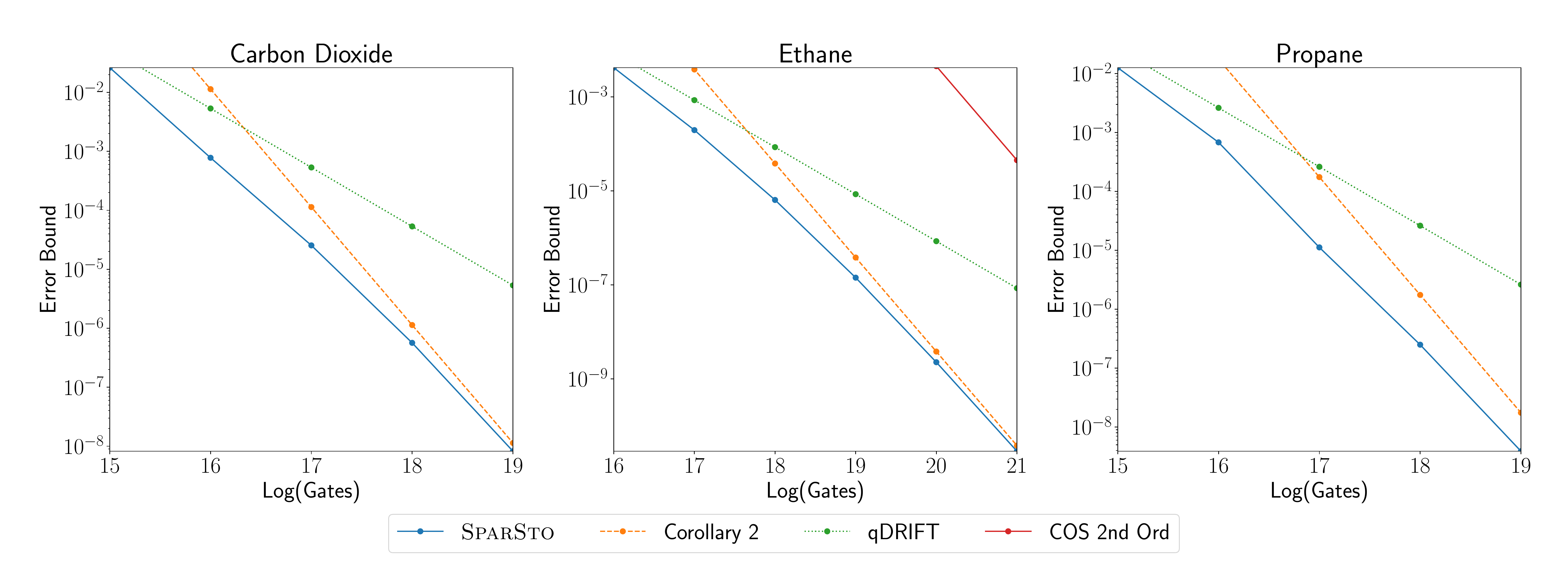}
\caption{Error Bounds: Rigorous upper bounds on the simulation errors of various molecules in the STO-3G basis set with $L \ge 100000$ are compared with rigorous bounds for Trotterisation and qDRIFT. Here $t=6000$. In an intermediate regime for expected the number of gates, \methodname requires fewer gates than both \methodCOS and qDRIFT for a fixed simulation error.
For propane and carbon dioxide, the second order Trotter error bounds (COS 2nd Order \cite[Theorem 2]{childs2018faster}) are too large to be seen on the plots.}
\label{fig:1}
\end{figure*}%

To bound the diamond distance between $\expectationinline{\hat{\mathcal E}_s^r}$ and $e^{t\mathcal L}$,
we bound the diamond distance between $\expectationinline{\hat{\mathcal E}_s}$ and $e^{s\mathcal L}$.
Using the triangle inequality on a telescoping sum, the unit diamond norm of all channels, and the independence of each random unitary $\hat{\mathcal E}_s$, 
we get the bound 
\begin{align}
\left\|\expectationinline{\hat{\mathcal E}_s}^r  - e^{t \mathcal L}  \right\|_\diamond \le r \left\|\expectationinline{\hat{\mathcal E}_s}   - e^{s \mathcal L}  \right\|_\diamond. 
\end{align}
To obtain an upper bound on $\|\expectationinline{\hat{\mathcal E}_s}   - e^{s \mathcal L}  \|_\diamond$, we perform a series expansion of the operators $\hat{\mathcal E}_s$ and $e^{s \mathcal L}$ with respect to the parameter $s$ to get
$\hat{\mathcal E}_s = \sum_{j \ge 0} \hat {\mathcal A}_j s^j$
and 
$e^{s \mathcal L} = \sum_{j \ge 0} \mathcal B_j s^j$. 
Using this notation,
we can see that $\hat {\mathcal A}_0 $ and $\mathcal B_0$ are both trivially the identity operator $\mathbb 1$, and 
$\expectationinline{\hat {\mathcal A}_1}$ and $\mathcal B_1$ are both equal to the Liouvillean $\mathcal L$.
To obtain the $\mathcal O(t^2\mu / G)$ and $\mathcal O(t^3\mu^2/G^2)$ terms in $\epsilon$, we evaluate upper bounds on 
$\| \expectationinline{\hat {\mathcal A}_2} - \mathcal B_2 \|_\diamond$
and $\| \expectationinline{\hat {\mathcal A}_3} - \mathcal B_3 \|_\diamond$ respectively. 
To do this, we rewrite $\hat {\mathcal A}_2$ 
as sums over products of $\hat{\mathcal L}_j^{k_j}$, where each sum comprises of terms of the form 
$\hat{\mathcal L}_j^{2}$, 
$\hat{\mathcal L}_j \hat{\mathcal L}_k$,
where $j$ and $k$ are distinct indices.
Having $j$ and $k$ distinct allows us to find that 
$\expectationinline{\hat{\mathcal L}_j^{2}} 
={{\mathcal L}_j^{2}/p_j}$
and
$\expectationinline{\hat{\mathcal L}_j \hat{\mathcal L}_k}
={{\mathcal L}_j {\mathcal L}_k}$.
Using a similar strategy for rewriting $\hat {\mathcal A}_3$,
we can evaluate its expectation explicitly.
Writing $\mathcal B_2$ and $\mathcal B_3$ in a similar form then allows us to compute the leading order terms in $\epsilon$. We supply the full details of this argument in Appendix \ref{app:order2,3-anaylsis}.

We upper bound the difference between tails of $\hat{\mathcal E}_s$ and $e^{s \mathcal L}$, which are $\mathcal O(t^4\mu^3 /G^3)$ terms, by essentially using the fundamental theorem of calculus to bound the tail of a power series from its derivatives.
From this, we evaluate upper bounds on the diamond norm of  $ \sum_{j\ge 4} s^j ( \expectationinline{\hat {\mathcal A}_j} - \mathcal B_j )$, and call this our tail bound.
To apply the fundamental theorem of calculus, we first take the fourth derivatives of $\hat{\mathcal E}_{s \theta}$ and $e^{s \theta \mathcal L}$ with respect to $\theta$, evaluate upper bounds on the norm of their difference over the unit interval for $\theta$. Second, we integrate this upper bound over an appropriate region, which gives a rescaling factor of $1/4!$.
Also, by obtaining polynomials in the diamond norms of $\mathcal L_j$ and subsequently using the inequality $\|\mathcal L_j\|_\diamond \le  2 h_j$,
along with the triangle inequality on the diamond norm of the difference between the ideal channel and the approximate channel,
we can obtain a closed form expression for the tail bounds which we show explicitly in Theorem~\ref{thm:main-complete} of Appendix \ref{app:order2,3-anaylsis}.

It is important to point out that the upper bound on the simulation error in Theorem~\ref{thm:main} depends very much on the choice of the probabilities $p_1,\dots ,p_L$.
Each $p_j$ signifies the probability that the Hamiltonian term $H_j$ contributes to the Trotterisation at each iteration. The smaller the value of $\mu = p_1 + \dots + p_L$, the sparser our Hamiltonian simulation is.
Intuitively, different choices on the values of the probabilities $p_j$ in Theorem~\ref{thm:main} affect the overall simulation error of $e^{t \mathcal L}$.
When all probabilities are equal to one, \methodname becomes identical to \methodCOS.
The simulation error, can thereby be obtained as the following corollary of Theorem~\ref{thm:main}.
\begin{corollary} \label{coro:epsilon-p=1}
When $p_1=\dots=p_L=1$, the simulation error is at most
\begin{align}
    \epsilon &= 
\frac{8t^3 L^2}{3G^2}  
\left(
\lambda \sum_{j=1}^L h_j^2
+
\frac{2\lambda^3}{3} 
\right)
+
\mathcal O\left(\frac{  t^4 L^3  }{ G^3}\right) . \notag
\end{align} 
\end{corollary}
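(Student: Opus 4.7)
The plan is to derive the corollary by direct substitution of $p_1=\dots=p_L=1$ into the general bound of Theorem~\ref{thm:main}, simplifying each of the three vectors $\bf u$, $\bf v$, $\bf w$ and the scalar $\mu$, and then collecting terms. No new estimation is required since the corollary is a specialisation; the work is purely algebraic.

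First I would compute $\mu = \|{\bf p}\|_1 = \sum_{j=1}^L 1 = L$, which will replace every occurrence of $\mu$ in the bound. Next I would evaluate the three vectors componentwise. Each entry of $\bf u$ becomes $(1/1 - 1)h_j^2 = 0$, so $\|{\bf u}\|_1 = 0$; similarly each entry of $\bf v$ becomes $(1/1 - 1)h_j^3 = 0$, so $\|{\bf v}\|_1 = 0$; while each entry of $\bf w$ becomes $(3 - 1)h_j^2 = 2h_j^2$, so $\|{\bf w}\|_1 = 2 \sum_{j=1}^L h_j^2$. The quantity $\lambda = \|{\bf h}\|_1$ is unchanged.

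Substituting these into the expression for $K$ gives $K = 0 + \lambda \cdot 2 \sum_{j=1}^L h_j^2 + 4\lambda^3/3 = 2\lambda \sum_{j=1}^L h_j^2 + 4\lambda^3/3$. The leading $\mathcal{O}(t^2\mu/G)$ term of $\epsilon$ vanishes because $\|{\bf u}\|_1 = 0$, leaving only the $\mathcal{O}(t^3\mu^2/G^2)$ contribution and the higher-order tail. Combining with $\mu = L$,
\begin{align}
\epsilon = \frac{4 t^3 L^2}{3 G^2}\left(2\lambda \sum_{j=1}^L h_j^2 + \frac{4\lambda^3}{3}\right) + \mathcal{O}\!\left(\frac{t^4 L^3}{G^3}\right),\notag
\end{align}
which after pulling out a factor of $2$ from the parenthesis yields exactly the stated form.

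There is no real obstacle: the only thing to be mindful of is verifying that the $(1/p_j - 1)$ and $(1/p_j^2 - 1)$ prefactors in $\bf u$ and $\bf v$ both vanish at $p_j = 1$ (so that the linear-in-$L$ leading term disappears and the corollary is driven entirely by the second-order expansion), and that the prefactor $(3/p_j - 1)$ in $\bf w$ leaves a nontrivial $2h_j^2$. The remainder term inherits its $\mathcal{O}(t^4 L^3/G^3)$ scaling directly from the $\mathcal{O}(t^4 \mu^3/G^3)$ tail of Theorem~\ref{thm:main} with $\mu = L$.
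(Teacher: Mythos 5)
Your proof is correct and matches the paper's intended derivation: the corollary is obtained by direct substitution of $p_j=1$ into Theorem~\ref{thm:main}, which makes $\|{\bf u}\|_1=\|{\bf v}\|_1=0$, $\|{\bf w}\|_1=2\sum_j h_j^2$, and $\mu=L$, and the arithmetic collecting the factor of $2$ is right.
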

While the bound that we have in Corollary \ref{coro:epsilon-p=1} is tighter than  \cite[Theorem 1]{childs2018faster}, a careful analysis of the third order terms in \cite[Theorem 1]{childs2018faster} yields the same expression as that given in Corollary \ref{coro:epsilon-p=1}. 

One might also observe that when all the probabilities in Theorem~\ref{thm:main} are set to $p_j=1$, we have $\|{\bf u}\|_1 = \|{\bf v}\|_1 = 0$ and 
$\|{\bf w}\|_1$ is minimized, which implies that $\epsilon / r$ which is roughly the simulation error per time segment $s$, is in fact minimized.
This leads one to wonder what advantage might be gained by setting the probabilities to be otherwise.
The solution to this conundrum lies in the penalty we pay in making such a choice.
In this scenario, each $\hat{\mathcal E}_s$ comprises of $\mu = L$ gates, and the overall error $\epsilon $ for simulating $e^{t \mathcal L}$ need not be optimized since $rs^{j} \sim t^j (\mu/G)^{j-1}$, which appears as coefficients in Theorem~\ref{thm:main}, is in fact maximized when $\mu = L$. The resultant algorithm simulates $e^{t \mathcal L}$, with an expected gate count of $G$ when $s=\mu t/G$ for all but the last repeat and $r = \ceil{G/\mu}$.

One can imagine \methodname to be analogous to another qDRIFT where $\mu=1$ so that the expected number of gates per time segment $s$ is equal to one.
The tradeoff in this scenario is that $\|{\bf u}\|_1$ and $\|{\bf v}\|_1$ are potentially very large because the probabilities become very small.
The key advantage of using Theorem~\ref{thm:main}  allows us to understand how $\epsilon$ interpolates between having all the probabilities to be either 1 or 0.
In what follows, we consider one family of probability distributions that we use together with Theorem~\ref{thm:main}.
For this example, we set $p_j=1$ whenever $h_j$ is above a set threshold.
Otherwise, $p_j < 1$.
We denote the active set $A$ as the set of indices $j$ for which $p_j =1$, and the inactive set $\bar A$ to be the set of indices for which $p_j<1$. 
We choose the values of $p_j$ according the following ansatz.
\begin{definition}[Linear ansatz]
For every $j\in A$, we set $p_j=1$.
For every $j \in \bar A$, we set $p_j = c h_j$.
We correspondingly have $\mu = |A| + c \sum_{j \in \bar A} h_j$.
\end{definition}
Clearly, $c$ has to be sufficiently small so that we indeed have $p_j<1$ for all indices $j$ in the inactive set.
By minimizing $\epsilon$ with respect to all possible values of $|A|$ and $\mu$ using our linear ansatz, we can determine which probabilities $p_j$ to use. These probabilities can be inputted into \methodname, which we describe in the pseudocode Algorithm 1.

\begin{figure}[htbp]
\centering
\includegraphics[width=0.5\textwidth]{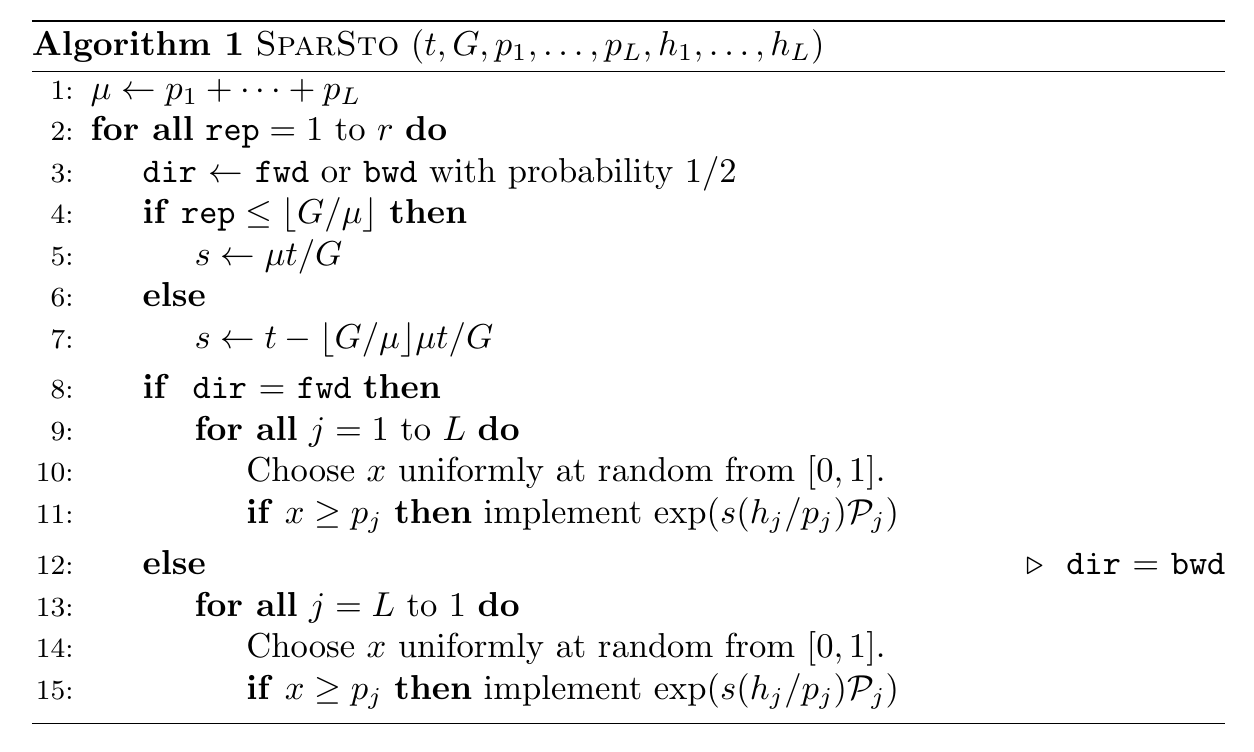} 
\end{figure}

We numerically study the performance of \methodname using models of molecules drawn from the OpenFermion library \cite{mcclean2017openfermion}, including carbon dioxide, ethane, and propane in the STO-3G basis set, and depict these results in Fig.~\ref{fig:1}. We evaluate the error bound for \methodname given by Theorem~\ref{thm:main-complete} in Appendix \ref{app:order2,3-anaylsis}. We compare the performance of \methodname with the Trotter bounds from \cite{childs2018faster}  (Theorem 2 in their paper, setting $k=1$), and by setting all probabilities $p_j=1$ we also plot Corollary \ref{coro:epsilon-p=1}.  Only the second order bounds from Childs {\em et al.} \cite[Theorem 2]{childs2018faster} are visible, in the upper-right of the second plot.

We perform a limited brute force numerical optimisation over all feasible values of $\mu$ and $|A|$ for our ansatzes; we examine $|A|/L$ over the interval $[0,1]$ with a step size of $0.1$, and consider the same values for $\mu' = (\mu-|A|)/(L-|A|)$ along with $1 \times 10^{-5}, 1\times10^{-4}, \text{ and } 1\times 10^{-3}$; we consider all pairwise combinations of these settings. 
Intuitively we expect that as the gate budget $G$ increases, we ought to interpolate between the qDRIFT regime \cite{campbell2018random} and the \methodCOS regime, 
and the size of the active set $|A|$ ought to go from 0 to $L$.
We observe from our numerical study that this indeed is the case.
In general, the optimal active set size increases with $G$, and the optimal value for $\mu'$ was usually small, and never more than $0.3$.

The linear ansatz outperforms the uniform ansatz. This is expected, as the uniform ansatz is na\"ive and the linear ansatz can be obtained as the optimal solution of the convex program which minimizes the leading order term in the total error for constant $\mu$ (see Appendix \ref{app:cvxopt}).
In each of the molecules, the number of Hamiltonian terms is over a hundred thousand, which is very large. When $t=6000$ and for a range of desired error values, there is a considerable advantage in using \methodname over both \methodCOS and qDRIFT. We observe similar results across other values of $t$ and smaller molecules.
 
\section{Discussion}
While vanilla Trotterisation can simulate any Hamiltonian with sufficiently many gates, the number of these gates can become very large. This leads to the need to reduce the gate count of quantum simulation while keeping the size of simulation error fixed. 
Here we present a new approach to chemistry simulation on a quantum machine, using the stochastic sparsification of a target Hamiltonian to derive a hybrid approach between canonical Trotterisation and qDRIFT. Our analysis provides an upper error bound for the scheme, and optimisation over the probabilities used in sparsification allows for reductions in the simulation error over parameter regimes of interest.

It would be instructive to consider how the ideas in our hybrid approach might extend to other variants of quantum simulation schemes, such as that of the so-called ``quantum signal processing'' (QSP) \cite{Low2017} techniques, linear combinations of unitaries \cite{BERRY2017}, the use of quantum walks \cite{ChildsBerry2012,babbush-1902.02134}, 
qubitisation \cite{Low2019-qubitization,babbush-1902.02134} and post-processing techniques \cite{wiebe-1907.13117}. 
There has also been recent interest in the quantum simulation of time dependent Hamiltonians \cite{wiebe-1805.00675,berry2019time}, 
and applications of quantum simulation in phase estimation \cite{campbell2018random,wiebe-1907.10070}, which may also prove amenable to stochastic sparsification. 
Given that random techniques can prove advantageous when applied to hybrid quantum-classical algorithms for numerical optimisation \cite{sweke2019stochastic}, our techniques might also offer some speedups in this area.
Furthermore, there might exist certain families of Hamiltonians where
the advantage of using our techniques over deterministic Trotterisation can be understood analytically, and we leave this as a subject for future work.

\textit{Acknowledgements}.- This work was supported by the EPSRC (grant no. EP/M024261/1), and has also received research funding from Huawei.  
We like to thank Yuan Su for a careful reading and comments on an earlier version of this manuscript.

\bibliography{qsim}{}

\begin{thebibliography}{44}
\providecommand{\natexlab}[1]{#1}
\providecommand{\url}[1]{\texttt{#1}}
\expandafter\ifx\csname urlstyle\endcsname\relax
  \providecommand{\doi}[1]{doi: #1}\else
  \providecommand{\doi}{doi: \begingroup \urlstyle{rm}\Url}\fi

\bibitem[Aspuru-Guzik(2005)]{AspuruGuzik2005}
A.~Aspuru-Guzik.
\newblock Simulated quantum computation of molecular energies.
\newblock \emph{Science}, 309\penalty0 (5741):\penalty0 1704--1707, September
  2005.
\newblock \doi{10.1126/science.1113479}.

\bibitem[Babbush et~al.(2015)Babbush, McClean, Wecker, Aspuru-Guzik, and
  Wiebe]{Babbush2015-PhysRevA.91.022311}
Ryan Babbush, Jarrod McClean, Dave Wecker, Al\'an Aspuru-Guzik, and Nathan
  Wiebe.
\newblock Chemical basis of {T}rotter-{S}uzuki errors in quantum chemistry
  simulation.
\newblock \emph{Phys. Rev. A}, 91:\penalty0 022311, Feb 2015.
\newblock \doi{10.1103/PhysRevA.91.022311}.

\bibitem[Babbush et~al.(2018{\natexlab{a}})Babbush, Gidney, Berry, Wiebe,
  McClean, Paler, Fowler, and Neven]{PhysRevX.8.041015-Tcomplexity}
Ryan Babbush, Craig Gidney, Dominic~W. Berry, Nathan Wiebe, Jarrod McClean,
  Alexandru Paler, Austin Fowler, and Hartmut Neven.
\newblock Encoding electronic spectra in quantum circuits with linear {T}
  complexity.
\newblock \emph{Phys. Rev. X}, 8:\penalty0 041015, Oct 2018{\natexlab{a}}.
\newblock \doi{10.1103/PhysRevX.8.041015}.

\bibitem[Babbush et~al.(2018{\natexlab{b}})Babbush, Wiebe, McClean, McClain,
  Neven, and Chan]{PhysRevX.8.011044-lowdepth}
Ryan Babbush, Nathan Wiebe, Jarrod McClean, James McClain, Hartmut Neven, and
  Garnet Kin-Lic Chan.
\newblock Low-depth quantum simulation of materials.
\newblock \emph{Phys. Rev. X}, 8:\penalty0 011044, Mar 2018{\natexlab{b}}.
\newblock \doi{10.1103/PhysRevX.8.011044}.

\bibitem[Beinert(1997)]{Beinert1997}
H.~Beinert.
\newblock Iron-sulfur clusters: Nature's modular, multipurpose structures.
\newblock \emph{Science}, 277\penalty0 (5326):\penalty0 653--659, August 1997.
\newblock \doi{10.1126/science.277.5326.653}.

\bibitem[Berry(2019)]{berry2019random}
Dominic~W Berry.
\newblock A random approach to quantum simulation.
\newblock \emph{Physics}, 12:\penalty0 91, 2019.
\newblock \doi{10.1103/physics.12.91}.

\bibitem[Berry et~al.(2017)Berry, Childs, Cleve, Kothari, and Somma]{BERRY2017}
Dominic~W. Berry, Andrew~M. Childs, Richard Cleve, Robin Kothari, and
  Rolando~D. Somma.
\newblock Exponential improvement in precision for simulating sparse
  {H}amiltonians.
\newblock \emph{Forum of Mathematics, Sigma}, 5, 2017.
\newblock \doi{10.1017/fms.2017.2}.

\bibitem[Berry et~al.(2019{\natexlab{a}})Berry, Childs, Su, Wang, and
  Wiebe]{berry2019time}
Dominic~W Berry, Andrew~M Childs, Yuan Su, Xin Wang, and Nathan Wiebe.
\newblock Time-dependent {H}amiltonian simulation with ${L}^{1}$-norm scaling.
\newblock \emph{arXiv preprint arXiv:1906.07115}, 2019{\natexlab{a}}.

\bibitem[Berry et~al.(2019{\natexlab{b}})Berry, Gidney, Motta, McClean, and
  Babbush]{babbush-1902.02134}
Dominic~W. Berry, Craig Gidney, Mario Motta, Jarrod~R. McClean, and Ryan
  Babbush.
\newblock Qubitization of {A}rbitrary {B}asis {Q}uantum {C}hemistry
  {L}everaging {S}parsity and {L}ow {R}ank {F}actorization.
\newblock \emph{{Quantum}}, 3:\penalty0 208, December 2019{\natexlab{b}}.
\newblock ISSN 2521-327X.
\newblock \doi{10.22331/q-2019-12-02-208}.

\bibitem[Bravyi and Haah(2013)]{BrH13}
Sergey Bravyi and Jeongwan Haah.
\newblock {Quantum Self-Correction in the 3D Cubic Code Model}.
\newblock \emph{Phys. Rev. Lett.}, 111\penalty0 (20):\penalty0 200501, November
  2013.
\newblock \doi{10.1103/PhysRevLett.111.200501}.

\bibitem[Campbell(2017)]{campbell2017mixing}
Earl Campbell.
\newblock Shorter gate sequences for quantum computing by mixing unitaries.
\newblock \emph{Phys. Rev. A}, 95:\penalty0 042306, Apr 2017.
\newblock \doi{10.1103/PhysRevA.95.042306}.

\bibitem[Campbell(2019)]{campbell2018random}
Earl Campbell.
\newblock Random compiler for fast {H}amiltonian simulation.
\newblock \emph{Phys. Rev. Lett.}, 123:\penalty0 070503, Aug 2019.
\newblock \doi{10.1103/PhysRevLett.123.070503}.

\bibitem[Childs and Berry(2012)]{ChildsBerry2012}
Andrew~M. Childs and Dominic~W. Berry.
\newblock Black-box {H}amiltonian simulation and unitary implementation.
\newblock \emph{Quantum Information and Computation}, 12\penalty0 (1-2), 2012.
\newblock \doi{10.26421/qic12.1-2}.

\bibitem[Childs et~al.(2018)Childs, Maslov, Nam, Ross, and
  Su]{childs2018toward}
Andrew~M. Childs, Dmitri Maslov, Yunseong Nam, Neil~J. Ross, and Yuan Su.
\newblock Toward the first quantum simulation with quantum speedup.
\newblock \emph{Proceedings of the National Academy of Sciences}, 115\penalty0
  (38):\penalty0 9456--9461, 2018.
\newblock ISSN 0027-8424.
\newblock \doi{10.1073/pnas.1801723115}.

\bibitem[Childs et~al.(2019)Childs, Ostrander, and Su]{childs2018faster}
Andrew~M. Childs, Aaron Ostrander, and Yuan Su.
\newblock Faster quantum simulation by randomization.
\newblock \emph{Quantum}, 3:\penalty0 182, September 2019.
\newblock \doi{10.22331/q-2019-09-02-182}.

\bibitem[Hastings(2017)]{HastingsML}
Matthew~B. Hastings.
\newblock Turning gate synthesis errors into incoherent errors.
\newblock \emph{Quantum Info. Comput.}, 17\penalty0 (5-6):\penalty0 488--494,
  March 2017.
\newblock ISSN 1533-7146.
\newblock \doi{10.26421/QIC17.5-6}.

\bibitem[Hempel et~al.(2018)Hempel, Maier, Romero, McClean, Monz, Shen,
  Jurcevic, Lanyon, Love, Babbush, Aspuru-Guzik, Blatt, and
  Roos]{PhysRevX.8.031022-trapped-ion}
Cornelius Hempel, Christine Maier, Jonathan Romero, Jarrod McClean, Thomas
  Monz, Heng Shen, Petar Jurcevic, Ben~P. Lanyon, Peter Love, Ryan Babbush,
  Al\'an Aspuru-Guzik, Rainer Blatt, and Christian~F. Roos.
\newblock Quantum chemistry calculations on a trapped-ion quantum simulator.
\newblock \emph{Phys. Rev. X}, 8:\penalty0 031022, Jul 2018.
\newblock \doi{10.1103/PhysRevX.8.031022}.

\bibitem[Huggins et~al.(2019)Huggins, McClean, Rubin, Jiang, Wiebe, Whaley, and
  Babbush]{wiebe-1907.13117}
William~J. Huggins, Jarrod McClean, Nicholas Rubin, Zhang Jiang, Nathan Wiebe,
  K.~Birgitta Whaley, and Ryan Babbush.
\newblock Efficient and noise resilient measurements for quantum chemistry on
  near-term quantum computers.
\newblock \emph{arXiv:1907.13117}, 2019.

\bibitem[Kitaev et~al.(2002)Kitaev, Shen, Vyalyi, and
  Vyalyi]{kitaev2002classical}
Alexei~Yu Kitaev, Alexander Shen, Mikhail~N Vyalyi, and Mikhail~N Vyalyi.
\newblock \emph{Classical and quantum computation}.
\newblock Number~47. American Mathematical Soc., 2002.
\newblock \doi{10.1090/gsm/047}.

\bibitem[Kivlichan et~al.(2018)Kivlichan, McClean, Wiebe, Gidney, Aspuru-Guzik,
  Chan, and Babbush]{PhysRevLett.120.110501-lineardepth}
Ian~D. Kivlichan, Jarrod McClean, Nathan Wiebe, Craig Gidney, Al\'an
  Aspuru-Guzik, Garnet Kin-Lic Chan, and Ryan Babbush.
\newblock Quantum simulation of electronic structure with linear depth and
  connectivity.
\newblock \emph{Phys. Rev. Lett.}, 120:\penalty0 110501, Mar 2018.
\newblock \doi{10.1103/PhysRevLett.120.110501}.

\bibitem[Kivlichan et~al.(2019{\natexlab{a}})Kivlichan, Gidney, Berry, Wiebe,
  McClean, Sun, Jiang, Rubin, Fowler, Aspuru-Guzik, Neven, and
  Babbush]{babbush-1902.10673}
Ian~D. Kivlichan, Craig Gidney, Dominic~W. Berry, Nathan Wiebe, Jarrod McClean,
  Wei Sun, Zhang Jiang, Nicholas Rubin, Austin Fowler, Alán Aspuru-Guzik,
  Hartmut Neven, and Ryan Babbush.
\newblock Improved fault-tolerant quantum simulation of condensed-phase
  correlated electrons via {T}rotterization.
\newblock \emph{arXiv:1902.10673}, 2019{\natexlab{a}}.

\bibitem[Kivlichan et~al.(2019{\natexlab{b}})Kivlichan, Granade, and
  Wiebe]{wiebe-1907.10070}
Ian~D. Kivlichan, Christopher~E. Granade, and Nathan Wiebe.
\newblock Phase estimation with randomized {H}amiltonians.
\newblock \emph{arXiv:1907.10070}, 2019{\natexlab{b}}.

\bibitem[Li et~al.(2019)Li, Liu, Wang, Ashhab, Cui, Chen, Peng, and
  Du]{PhysRevLett.122.090504-qsim-h2o-transitions}
Zhaokai Li, Xiaomei Liu, Hefeng Wang, Sahel Ashhab, Jiangyu Cui, Hongwei Chen,
  Xinhua Peng, and Jiangfeng Du.
\newblock Quantum simulation of resonant transitions for solving the
  eigenproblem of an effective water {H}amiltonian.
\newblock \emph{Phys. Rev. Lett.}, 122:\penalty0 090504, Mar 2019.
\newblock \doi{10.1103/PhysRevLett.122.090504}.

\bibitem[Lindblad(1976)]{Lin76}
G~Lindblad.
\newblock {On the generators of quantum dynamical semigroups}.
\newblock \emph{Communications in Mathematical Physics}, 48\penalty0
  (2):\penalty0 119--130, 1976.
\newblock ISSN 0010-3616.
\newblock \doi{10.1007/BF01608499}.

\bibitem[Lloyd(1996)]{Lloyd1996}
S.~Lloyd.
\newblock Universal quantum simulators.
\newblock \emph{Science}, 273\penalty0 (5278):\penalty0 1073--1078, August
  1996.
\newblock \doi{10.1126/science.273.5278.1073}.

\bibitem[Low and Chuang(2017)]{Low2017}
Guang~Hao Low and Isaac~L. Chuang.
\newblock Optimal {H}amiltonian simulation by quantum signal processing.
\newblock \emph{Physical Review Letters}, 118\penalty0 (1), January 2017.
\newblock \doi{10.1103/physrevlett.118.010501}.

\bibitem[Low and Chuang(2019)]{Low2019-qubitization}
Guang~Hao Low and Isaac~L. Chuang.
\newblock {H}amiltonian simulation by qubitization.
\newblock \emph{Quantum}, 3:\penalty0 163, July 2019.
\newblock \doi{10.22331/q-2019-07-12-163}.

\bibitem[Low and Wiebe(2018)]{wiebe-1805.00675}
Guang~Hao Low and Nathan Wiebe.
\newblock {H}amiltonian simulation in the interaction picture.
\newblock \emph{arXiv:1805.00675}, 2018.

\bibitem[McArdle et~al.(2018)McArdle, Endo, Aspuru-Guzik, Benjamin, and
  Yuan]{mccardle2018}
Sam McArdle, Suguru Endo, Alan Aspuru-Guzik, Simon Benjamin, and Xiao Yuan.
\newblock Quantum computational chemistry.
\newblock \emph{arXiv preprint arXiv:1808.10402}, 2018.

\bibitem[McClean et~al.(2017)McClean, Kivlichan, Sung, Steiger, Cao, Dai,
  Fried, Gidney, Gimby, Gokhale, et~al.]{mcclean2017openfermion}
Jarrod~R McClean, Ian~D Kivlichan, Kevin~J Sung, Damian~S Steiger, Yudong Cao,
  Chengyu Dai, E~Schuyler Fried, Craig Gidney, Brendan Gimby, Pranav Gokhale,
  et~al.
\newblock Open{F}ermion: the electronic structure package for quantum
  computers.
\newblock \emph{arXiv preprint arXiv:1710.07629}, 2017.

\bibitem[McClean et~al.(2019)McClean, Faulstich, Zhu, O'Gorman, Qiu, White,
  Babbush, and Lin]{babbush-1909.00028}
Jarrod~R. McClean, Fabian~M. Faulstich, Qinyi Zhu, Bryan O'Gorman, Yiheng Qiu,
  Steven~R. White, Ryan Babbush, and Lin Lin.
\newblock Discontinuous {G}alerkin discretization for quantum simulation of
  chemistry.
\newblock \emph{arXiv:1909.00028}, 2019.

\bibitem[Nocedal and Wright(2006)]{nocedal2006numerical}
Jorge Nocedal and Stephen Wright.
\newblock \emph{Numerical optimization}.
\newblock Springer Science \& Business Media, 2006.
\newblock \doi{10.1007/b98874}.

\bibitem[O'Malley et~al.(2016)O'Malley, Babbush, Kivlichan, Romero, McClean,
  Barends, Kelly, Roushan, Tranter, Ding, Campbell, Chen, Chen, Chiaro,
  Dunsworth, Fowler, Jeffrey, Lucero, Megrant, Mutus, Neeley, Neill, Quintana,
  Sank, Vainsencher, Wenner, White, Coveney, Love, Neven, Aspuru-Guzik, and
  Martinis]{PhysRevX.6.031007-scalable-qsim-of-molecular-energies}
P.~J.~J. O'Malley, R.~Babbush, I.~D. Kivlichan, J.~Romero, J.~R. McClean,
  R.~Barends, J.~Kelly, P.~Roushan, A.~Tranter, N.~Ding, B.~Campbell, Y.~Chen,
  Z.~Chen, B.~Chiaro, A.~Dunsworth, A.~G. Fowler, E.~Jeffrey, E.~Lucero,
  A.~Megrant, J.~Y. Mutus, M.~Neeley, C.~Neill, C.~Quintana, D.~Sank,
  A.~Vainsencher, J.~Wenner, T.~C. White, P.~V. Coveney, P.~J. Love, H.~Neven,
  A.~Aspuru-Guzik, and J.~M. Martinis.
\newblock Scalable quantum simulation of molecular energies.
\newblock \emph{Phys. Rev. X}, 6:\penalty0 031007, Jul 2016.
\newblock \doi{10.1103/PhysRevX.6.031007}.

\bibitem[Poulin et~al.(2015)Poulin, Hastings, Wecker, Wiebe, Doberty, and
  Troyer]{coalescing-QIC-1406.4920}
David Poulin, M.~B. Hastings, D.~Wecker, N.~Wiebe, Andrew~C. Doberty, and
  M.~Troyer.
\newblock The {T}rotter step size required for accurate quantum simulation of
  quantum chemistry.
\newblock \emph{Quantum Information \& Computation}, 15\penalty0
  (5-6):\penalty0 0361--0384, 2015.
\newblock \doi{10.26421/qic15.5-6}.

\bibitem[Reiher et~al.(2017)Reiher, Wiebe, Svore, Wecker, and
  Troyer]{Reiher2017}
Markus Reiher, Nathan Wiebe, Krysta~M. Svore, Dave Wecker, and Matthias Troyer.
\newblock Elucidating reaction mechanisms on quantum computers.
\newblock \emph{Proceedings of the National Academy of Sciences}, 114\penalty0
  (29):\penalty0 7555--7560, July 2017.
\newblock \doi{10.1073/pnas.1619152114}.

\bibitem[Setia and Whitfield(2018)]{Setia2018-BrK-simulation}
Kanav Setia and James~D. Whitfield.
\newblock Bravyi-{K}itaev superfast simulation of electronic structure on a
  quantum computer.
\newblock \emph{The Journal of Chemical Physics}, 148\penalty0 (16):\penalty0
  164104, April 2018.
\newblock \doi{10.1063/1.5019371}.

\bibitem[Somma(2016)]{Somma2016-TrotterSuzukiJMP}
Rolando~D. Somma.
\newblock A {T}rotter-{S}uzuki approximation for lie groups with applications
  to {H}amiltonian simulation.
\newblock \emph{Journal of Mathematical Physics}, 57\penalty0 (6):\penalty0
  062202, June 2016.
\newblock \doi{10.1063/1.4952761}.

\bibitem[Suzuki(1976)]{Suz76}
Masuo Suzuki.
\newblock Generalized {T}rotter's formula and systematic approximants of
  exponential operators and inner derivations with applications to many-body
  problems.
\newblock \emph{Comm. Math. Phys.}, 51\penalty0 (2):\penalty0 183--190, 1976.
\newblock \doi{10.1007/bf01609348}.

\bibitem[Suzuki(1990)]{Suzuki1990}
Masuo Suzuki.
\newblock Fractal decomposition of exponential operators with applications to
  many-body theories and {M}onte {C}arlo simulations.
\newblock \emph{Physics Letters A}, 146\penalty0 (6):\penalty0 319--323, June
  1990.
\newblock \doi{10.1016/0375-9601(90)90962-n}.

\bibitem[Suzuki(1991)]{Suzuki1991}
Masuo Suzuki.
\newblock General theory of fractal path integrals with applications to
  many-body theories and statistical physics.
\newblock \emph{Journal of Mathematical Physics}, 32\penalty0 (2):\penalty0
  400--407, February 1991.
\newblock \doi{10.1063/1.529425}.

\bibitem[Sweke et~al.(2019)Sweke, Wilde, Meyer, Schuld, F{\"a}hrmann,
  Meynard-Piganeau, and Eisert]{sweke2019stochastic}
Ryan Sweke, Frederik Wilde, Johannes Meyer, Maria Schuld, Paul~K F{\"a}hrmann,
  Barth{\'e}l{\'e}my Meynard-Piganeau, and Jens Eisert.
\newblock Stochastic gradient descent for hybrid quantum-classical
  optimization.
\newblock \emph{arXiv preprint arXiv:1910.01155}, 2019.

\bibitem[Wecker et~al.(2014{\natexlab{a}})Wecker, Bauer, Clark, Hastings, and
  Troyer]{Wecker2014-PhysRevA.90.022305}
Dave Wecker, Bela Bauer, Bryan~K. Clark, Matthew~B. Hastings, and Matthias
  Troyer.
\newblock Gate-count estimates for performing quantum chemistry on small
  quantum computers.
\newblock \emph{Phys. Rev. A}, 90:\penalty0 022305, Aug 2014{\natexlab{a}}.
\newblock \doi{10.1103/PhysRevA.90.022305}.

\bibitem[Wecker et~al.(2014{\natexlab{b}})Wecker, Bauer, Clark, Hastings, and
  Troyer]{coalescing-PhysRevA.90.022305-1312.1695}
Dave Wecker, Bela Bauer, Bryan~K. Clark, Matthew~B. Hastings, and Matthias
  Troyer.
\newblock Gate-count estimates for performing quantum chemistry on small
  quantum computers.
\newblock \emph{Phys. Rev. A}, 90:\penalty0 022305, Aug 2014{\natexlab{b}}.
\newblock \doi{10.1103/PhysRevA.90.022305}.

\bibitem[Whitfield et~al.(2011)Whitfield, Biamonte, and
  Aspuru-Guzik]{Whitfield2011}
James~D. Whitfield, Jacob Biamonte, and Al{\'{a}}n Aspuru-Guzik.
\newblock Simulation of electronic structure {H}amiltonians using quantum
  computers.
\newblock \emph{Molecular Physics}, 109\penalty0 (5):\penalty0 735--750, March
  2011.
\newblock \doi{10.1080/00268976.2011.552441}.

\end{thebibliography}
\bibliographystyle{plainnat}

\onecolumn
\newpage
\appendix

 \section{Upper bounds on the simulation error}
 \label{app:order2,3-anaylsis}
 In this section, we show that Theorem~\ref{thm:main} is a corollary of Theorem~\ref{thm:main-complete}, which we state in Section \ref{subsec:thm2-statement}.
 Before we can state Theorem~\ref{thm:main}, we define relevant notation in Section \ref{subsec:sums}.
 After that, we evaluate the leading order terms and tail terms of Theorem~\ref{thm:main-complete} in Section \ref{subsec:proof of leading order terms} and Section 
 \ref{subsec:proof of tail bounds} respectively.
\subsection{Sum over distinct indices}
\label{subsec:sums}

Given real vectors ${\bf a}=(a_1,\dots, a_n), {\bf b}= (b_1,\dots, b_n)$ and ${\bf c}= (c_1,\dots, c_n)$,
we define the sums over distinct indices to be
\begin{align}
\mathcal  S({\bf a}) 
&= \sum_{j=1}^n a_j\le \|{\bf a}\|_1,
\notag\\
\mathcal  S({\bf a},{\bf b}) 
&= 
\sum_{\substack{1 \le j,k\le n\\ j,k {\rm \ distinct}}} 
a_j b_k \le \|{\bf a}\|_1  \|{\bf b}\|_1, 
\notag\\
\mathcal  S({\bf a},{\bf b},{\bf c}) 
&= 
\sum_{\substack{1 \le j,k,l\le n\\ j,k,l {\rm \ distinct}}} 
a_j b_k c_l 
\le \|{\bf a}\|_1  \|{\bf b}\|_1  \|{\bf c}\|_1.
\label{eq:distinct-sums}
\end{align} 
To perform fast computation of the above sums, we can use the following lemma which
vectorises summations with distinct indices.
\begin{lemma}
\label{lem:combi-thirdmoment} 
Let $n$ be a positive integer. Let ${\bf a}=(a_1, \dots, a_n)$ and
 ${\bf b} = (b_1,\dots, b_n)$ be real column vectors.
Then 
\begin{align}
\mathcal S({\bf a}, {\bf b})
&=
 A_1 B_1  -  C_1 \label{relabel-trick1}
\\
\mathcal S({\bf a}, {\bf b}, {\bf b}) 
&= 
A_1 (B_1^2 - B_2)
- 2 C_1 B_1 + 2 C_2, \label{relabel-trick2a}
\\
\mathcal S({\bf a}, {\bf a}, {\bf a})
&= 
A_1^3 - 3 A_2 A_1 + 2 A_3, \label{relabel-trick2b}
\end{align}
where 
\begin{align}
A_j  &= \sum_{u=1}^n a_u^j,\\
B_j  &= \sum_{u=1}^n b_u^j,\\
C_j &= \sum_{u=1}^n a_u b_u^j.
\end{align}
\end{lemma}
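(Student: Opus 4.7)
The plan is to prove all three identities by inclusion--exclusion on the coincidence pattern of the summation indices, reducing each distinct-index sum to unrestricted sums which factorise immediately as products of the scalars $A_j$, $B_j$, and $C_j$. Identity \eqref{relabel-trick1} serves as the warm-up: starting from the factorised double sum $A_1 B_1 = \sum_{j,k=1}^n a_j b_k$ and subtracting the diagonal contribution $\sum_{j=1}^n a_j b_j = C_1$ immediately yields $\mathcal S({\bf a},{\bf b}) = A_1 B_1 - C_1$.

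For \eqref{relabel-trick2a} I would start from the factorised triple sum $A_1 B_1^2 = \sum_{j,k,l=1}^n a_j b_k b_l$ and subtract all configurations of $(j,k,l)$ in which at least two of the indices coincide, taking care to make the different coincidence patterns mutually exclusive. The triple coincidence $j=k=l$ contributes $\sum_j a_j b_j^2 = C_2$; the ``exactly $j=k$'' pattern (with $l\neq j$) contributes $\sum_{j\neq l} a_j b_j b_l = C_1 B_1 - C_2$; the ``exactly $j=l$'' pattern contributes an identical $C_1 B_1 - C_2$ by an analogous calculation; and the ``exactly $k=l$'' pattern (with $j\neq k$) contributes $\sum_{j\neq k} a_j b_k^2 = A_1 B_2 - C_2$. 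Subtracting these four contributions from $A_1 B_1^2$ and collecting terms gives $A_1(B_1^2 - B_2) - 2 C_1 B_1 + 2 C_2$, as claimed. Identity \eqref{relabel-trick2b} then drops out by specialising ${\bf b}={\bf a}$ in \eqref{relabel-trick2a}, under which $B_j \mapsto A_j$ and $C_j \mapsto A_{j+1}$, producing $A_1(A_1^2 - A_2) - 2 A_1 A_2 + 2 A_3 = A_1^3 - 3 A_1 A_2 + 2 A_3$.

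The only delicate point is the bookkeeping in the second step: the four non-distinct regions must be made pairwise disjoint before being subtracted from $A_1 B_1^2$, which is why each two-index coincidence term comes out as $C_1 B_1 - C_2$ (or $A_1 B_2 - C_2$) rather than the unconstrained $C_1 B_1$ (or $A_1 B_2$). With this care taken there is no further M\"{o}bius correction to apply, and the rest of the argument reduces to routine algebraic simplification.
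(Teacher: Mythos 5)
Your proof is correct and takes essentially the same route as the paper's: an inclusion--exclusion over index coincidences that reduces each distinct-index sum to products of the power sums $A_j$, $B_j$, $C_j$, followed by the specialisation ${\bf b}={\bf a}$ for \eqref{relabel-trick2b}. The only difference is bookkeeping --- the paper starts from the sum with only $v\neq w$ enforced and subtracts the two $u$-coincidence terms, while you start from the fully unrestricted triple sum $A_1B_1^2$ and subtract four disjoint coincidence patterns --- and both yield the same algebra.
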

Lemma \ref{lem:combi-thirdmoment} can be proved iteratively by careful consideration of summation indices.

\begin{proof}[Proof of Lemma \ref{lem:combi-thirdmoment}]
The result \eqref{relabel-trick1} is straightforward to show. 
To show \eqref{relabel-trick2a}, note that we can use \eqref{relabel-trick1} to write
\begin{align}
&\sum_{
\substack{
1 \le u,v,w \le n \\
u, v, w \ {\rm distinct}
}}
a_u b_v b_w  \notag\\
=&
\sum_{u=1}^n a_u 
\sum_{ \substack{ 1 \le v,w \le n \\ v ,w\ {\rm distinct} }} \!\! b_v b_w
-\!\!\!\!\!
\sum_{ \substack{ 1 \le u,w \le n \\ u ,w\ {\rm distinct} }} \!\!a_u b_u b_w
-\!\!\!\!\!
\sum_{ \substack{ 1 \le u,v \le n \\ u ,v\ {\rm distinct} }} \!\!a_u b_u b_v
\notag\\
=&
A_1 (B_1^2 - B_2)
- 2\sum_{u=1}^n a_u b_u B_1 + 2 \sum_{u=1}^n a_u b_u^2.\notag
\end{align}
We can specialize this to sum of distinct combinations of $a_u a_v a_w$ to get
\begin{align}
\sum_{
\substack{
1 \le u,v,w \le n \\
u, v, w\ {\rm distinct}
}}
a_u a_v a_w 
&= 
A_1 (A_1^2-A_2) - 2A_2 A_1 + 2A_3, \notag
\end{align}
which yields \eqref{relabel-trick2b}. 
\end{proof}

\subsection{Complete simulation error bound}
\label{subsec:thm2-statement}
The complete upper bound that we prove here is given by the following.
\begin{theorem} \label{thm:main-complete}
Using \methodname with vector of probabilities  ${\bf p} = (p_1,\dots, p_L)$, vector of Hamiltonian coefficients ${\bf h} = (h_1,\dots, h_L)$, $L \ge 3$, and expected number of gates $G$ where $G/(p_1+\dots +p_L)$ is an integer,
the error of simulating $e^{t \mathcal L}$ is at most $\epsilon$ where $\epsilon =  \epsilon_1 + \epsilon_2 + \epsilon_{3,1} + \epsilon_{3,2}  $ and
\begin{align}
    \epsilon_1 &=     \frac{2 t^2 \mu}{G} \mathcal S( {\bf u} ), \notag\\
    \epsilon_2 &=  \frac{4t^3 \mu^2}{3G^2}
\left(  \mathcal S( {\bf v} ) + \mathcal S( {\bf w}, {\bf h} ) \right)
+ \frac{16 t^3 \mu^2}{9G^2} 
\mathcal S({\bf h},{\bf h},{\bf h}),
\notag\\
\epsilon_{3,1} &= 
\frac{2t^4 \mu^3  \lambda^4  }{3G^3 }  ,
\notag\\
\epsilon_{3,2} &= 
\frac{2t^4 \mu^3 }{3 G^3  } 
(p_1 \dots p_L) \mathcal S({\bf q})^4
,\notag
\end{align} 
with $\mu = \sum_{j=1}^L p_j$.
Moreover, ${\bf u}, {\bf v}, {\bf w}$ and ${\bf q} $ are vectors given by
\begin{align} 
{\bf u} &= 
\left(
\left( \frac{1}{p_1} - 1 \right) h_1^2,
\dots,
 \left( \frac{1}{p_L} - 1 \right) h_L^2 \right),
 \notag\\
{\bf v} &= 
\left(
\left( \frac{1}{p_1^2} - 1 \right) h_1^3,
\dots,
 \left( \frac{1}{p_L^2} - 1 \right) h_L^3 \right),
 \notag\\
{\bf w} &= 
\left(
\left( \frac{3}{p_1} - 1 \right) h_1^2,
\dots,
 \left( \frac{3}{p_L} - 1 \right) h_L^2 \right),
 \notag\\
 {\bf q} &= \left( \frac{h_1}{p_1}, \dots ,  \frac{h_L}{p_L} \right)  .  \notag
\end{align}
\end{theorem}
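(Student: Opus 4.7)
The plan is to follow the strategy sketched in the main text. First I would apply subadditivity of the diamond norm across the $r$ independent Trotter steps: since $\mathbb{E}[\hat{\mathcal{E}}_s^r] = \mathbb{E}[\hat{\mathcal{E}}_s]^r$ by independence, and every realisation of $\hat{\mathcal{E}}_s$ is a mixture of unitaries with unit diamond norm, a telescoping argument yields
\[
\bigl\|\mathbb{E}[\hat{\mathcal{E}}_s^r] - e^{t \mathcal{L}}\bigr\|_\diamond \le r \bigl\|\mathbb{E}[\hat{\mathcal{E}}_s] - e^{s \mathcal{L}}\bigr\|_\diamond.
\]
With $r = G/\mu$ and $s = \mu t / G$, each factor $r s^k$ becomes $t^k (\mu/G)^{k-1}$, producing the prefactors visible in $\epsilon_1$, $\epsilon_2$, $\epsilon_{3,1}$, and $\epsilon_{3,2}$.

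Next I would expand both $\hat{\mathcal{E}}_s$ and $e^{s\mathcal{L}}$ as formal power series in $s$ with coefficients $\hat{\mathcal{A}}_j$ and $\mathcal{B}_j$. The zeroth- and first-order terms already agree in expectation, so the task reduces to controlling the $s^2$ and $s^3$ residuals plus a remainder. To obtain these residuals I would expand the forward and backward ordered Trotter products $\hat{\mathcal{T}}_{s,\to}$ and $\hat{\mathcal{T}}_{s,\gets}$ term by term, split the resulting multi-index sums into an ``all indices distinct'' portion and a ``some indices repeated'' portion, and use independence of the $\hat{\mathcal{L}}_j$ to compute
\[
\mathbb{E}\bigl[\hat{\mathcal{L}}_{j_1}^{a_1} \hat{\mathcal{L}}_{j_2}^{a_2} \cdots\bigr] = p_{j_1}^{1-a_1} p_{j_2}^{1-a_2} \cdots \mathcal{L}_{j_1}^{a_1} \mathcal{L}_{j_2}^{a_2} \cdots
\]
whenever $j_1, j_2, \ldots$ are distinct. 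The fully-distinct-index contributions reconstruct precisely the corresponding expansions of $\mathcal{B}_k = \mathcal{L}^k/k!$ and cancel; what is left over gives rise to the factors $1/p_j - 1$, $1/p_j^2 - 1$, and $3/p_j - 1$ that define $\mathbf{u}$, $\mathbf{v}$, and $\mathbf{w}$. Invoking Lemma~\ref{lem:combi-thirdmoment}, the three-index residuals collapse to $\mathcal{S}(\mathbf{v})$, $\mathcal{S}(\mathbf{w},\mathbf{h})$, and $\mathcal{S}(\mathbf{h},\mathbf{h},\mathbf{h})$, and the $\epsilon_1, \epsilon_2$ bounds then follow by subadditivity and submultiplicativity of $\|\cdot\|_\diamond$ combined with $\|\mathcal{L}_j\|_\diamond \le 2h_j$, with all constants tracked explicitly.

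For the tail $\sum_{j \ge 4} s^j (\mathbb{E}[\hat{\mathcal{A}}_j] - \mathcal{B}_j)$, I would apply the integral form of the fourth-order Taylor remainder: for $F:[0,1]\to\mathbb{R}$ smooth with $F(0)=F'(0)=F''(0)=F'''(0)=0$, $F(1)$ is an iterated integral of $F^{(4)}$, yielding an overall factor $1/4!$. Setting $F(\theta) = \hat{\mathcal{E}}_{s\theta} - e^{s\theta \mathcal{L}}$ (after subtracting the matching lower-order terms) and using the triangle inequality splits the tail into separate contributions from $\hat{\mathcal{E}}_{s\theta}$ and $e^{s\theta\mathcal{L}}$. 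The fourth $\theta$-derivative of $\hat{\mathcal{T}}_{s\theta,\to}$ is a sum of expressions in which four of the exponential factors are replaced by $s\hat{\mathcal{L}}_j$; each surviving exponential has unit diamond norm, so the product is bounded by $s^4 (\sum_j \|\hat{\mathcal{L}}_j\|_\diamond)^4 \le s^4 (2\mathcal{S}(\mathbf{q}))^4$. Taking expectation and absorbing the combinatorial factors produces $\epsilon_{3,2}$; the analogous bound $\|(d/d\theta)^4 e^{s\theta\mathcal{L}}\|_\diamond \le s^4 (2\lambda)^4$ produces $\epsilon_{3,1}$.

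The main obstacle is the third-order bookkeeping. The forward and backward ordered products generate many structurally distinct three-index monomials in $\hat{\mathcal{L}}$, involving patterns like $\hat{\mathcal{L}}_i \hat{\mathcal{L}}_j \hat{\mathcal{L}}_k$, $\hat{\mathcal{L}}_i^2 \hat{\mathcal{L}}_j$, $\hat{\mathcal{L}}_i \hat{\mathcal{L}}_j^2$, and $\hat{\mathcal{L}}_i^3$, each with its own ordering. The forward/backward symmetrisation provides partial cancellation, but verifying that the fully-distinct sums reconstruct $\mathcal{L}^3/6$ in expectation and that the residuals collapse precisely to the stated combination with the correct numerical coefficients is the most delicate combinatorial step; this is exactly what Lemma~\ref{lem:combi-thirdmoment} is designed to manage. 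Once the residuals are identified, passing from operator norms of $\mathcal{L}_j^a \mathcal{L}_k^b$ to scalar bounds via $\|\mathcal{L}_j\|_\diamond \le 2h_j$ and collecting constants is mechanical.
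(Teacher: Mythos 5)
Your overall architecture matches the paper's: telescoping to reduce to a single time slice, Taylor expansion with cancellation at orders $0$ and $1$, explicit computation of the order-$2$ and order-$3$ residuals via independence, and an integral-remainder bound on the tail. The leading-order part of your sketch is essentially the paper's argument (the paper organises the third-order residual into five pieces $D_1,\dots,D_5$; note that the fully-distinct-index contributions do \emph{not} cancel exactly against $\mathcal L^3/3!$ --- the surviving pieces are the source of the ``$+2$'' inside $3/p_j-1$ and of the $\mathcal S(\mathbf h,\mathbf h,\mathbf h)$ commutator term, which you do list, so this is an imprecision of wording rather than a gap. Also, Lemma~\ref{lem:combi-thirdmoment} is only a vectorisation identity for numerically evaluating the sums $\mathcal S(\cdot)$; the operator-level bookkeeping is done by hand.)

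There is, however, a genuine gap in your derivation of $\epsilon_{3,2}$. You bound the fourth derivative of $\hat{\mathcal T}_{s\theta,\to}$ \emph{pathwise} by $s^4\bigl(\sum_j\|\hat{\mathcal L}_j\|_\diamond\bigr)^4\le s^4\bigl(2\mathcal S(\mathbf q)\bigr)^4$ and then take the expectation. This controls $\mathbb E\bigl[\|\cdot\|_\diamond\bigr]$ by the worst case and yields $\frac{s^4}{4!}(2\mathcal S(\mathbf q))^4$ \emph{without} the factor $p_1\cdots p_L$; since $p_1\cdots p_L\le 1$ (and is typically extremely small), this is strictly weaker than the claimed $\epsilon_{3,2}$ and does not establish the theorem as stated. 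The quantity to bound is the diamond norm of the \emph{expectation} of the tail, and the paper exploits this: by independence the expectation of each multinomial term factorises as $\prod_j\mathbb E\bigl[\hat{\mathcal L}_j^{n_j}e^{s\theta\hat{\mathcal L}_j}\bigr]$, and the closed form $\mathbb E\bigl[\hat{\mathcal L}_j^{n_j}e^{s\theta\hat{\mathcal L}_j}\bigr]=p_j(\mathcal L_j/p_j)^{n_j}e^{s\theta\mathcal L_j/p_j}$ gives $\bigl\|\mathbb E\bigl[\hat{\mathcal L}_j^{n_j}e^{s\theta\hat{\mathcal L}_j}\bigr]\bigr\|_\diamond\le p_j(2h_j/p_j)^{n_j}$ \emph{for every} $j$, including those with $n_j=0$. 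Multiplying these bounds and applying the multinomial theorem is what produces $(p_1\cdots p_L)\bigl(\sum_j 2h_j/p_j\bigr)^4$. Your $\epsilon_{3,1}$ derivation is fine, but for $\epsilon_{3,2}$ you must commute the expectation past the norm in this controlled way rather than bounding the norm realisation by realisation.
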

We use $\mathcal S$ as defined in Section \ref{subsec:sums}.
We will see that by considering explicitly the commutation structure of the matrices $P_j$, we can obtain a tighter bound on $\epsilon_2$ in Theorem~\ref{thm:main-complete} by substituting $\mathcal S({\bf h},{\bf h},{\bf h})$ for $D_5$ in  \eqref{eq:commutator-bound}.

Before we proceed to prove Theorem~\ref{thm:main-complete}, we prove that Theorem~\ref{thm:main} is a straightforward consequence of Theorem~\ref{thm:main-complete}.
Note that 
\begin{proof}[Proof of Theorem~\ref{thm:main}]
By overcounting \eqref{eq:distinct-sums}, it is easy to see that 
\[
\mathcal S({\bf w} , {\bf h}) 
\le \|{\bf w} \|_1 \|{\bf h} \|_1
= \lambda \|{\bf w} \|_1 ,
\]
and
\[
\mathcal S({\bf h},{\bf h},{\bf h} )
\le \|{\bf h}\|_1  \|{\bf h}\|_1  \|{\bf h}\|_1
= \lambda^3.
\]
Moreover, since ${\bf u}$ and ${\bf v}$ are non-negative vectors, we have $\mathcal S({\bf u}) = \|{\bf u}\|_1$
and $\mathcal S({\bf v}) = \|{\bf v}\|_1$.
Furthermore, we have $\epsilon_{3,j} =\mathcal O(t^4 \mu^3/G^3)$.
This completes the proof.
\end{proof}

The proof of Theorem~\ref{thm:main-complete} then arises from the evaluation of (1) the leading order terms $\epsilon_{1}$ and $\epsilon_2$, and (2) the higher order terms $\epsilon_{3,1}$ and $\epsilon_{3,2}$. 
This will proceed in the next two subsections.
We emphasize that in what follows, because of the telescoping argument we mentioned in the main text, 
it suffices to only analyze $\|\hat{\mathcal E}_s- e^{s \mathcal L}\|_\diamond$, and the overall simulation error will just be $r$ times of this diamond norm.  

\subsection{The leading order terms $\epsilon_1$ and $\epsilon_2$ in Theorem~\ref{thm:main-complete}}
\label{subsec:proof of leading order terms}
Here, we show that the leading order terms in the simulation error are as given by $\epsilon_1$ and $\epsilon_2$.
First recall that we have the Taylor series expansions 
$\hat{\mathcal E}_s = \sum_{j \ge 0} \hat {\mathcal A}_j s^j$
and 
$e^{s \mathcal L} = \sum_{j \ge 0} \mathcal B_j s^j$.

Note that when $L\ge 3$, we have
\begin{align}
\hat{\mathcal T}_{s,\to}
&=
\prod_{j=1}^L 
\left(
{\bf 1} + s \hat{\mathcal L_j} + \frac{s^2}{2} \hat{\mathcal L_j} ^2  
+ \frac{s^3}{6} \hat{\mathcal L_j} ^3
+
\dots
\right) 
\notag\\
&=
{\bf 1} + s \sum_{j=1}^L \hat{\mathcal L_j} + \frac{s^2}{2} \sum_{j=1}^L \hat{\mathcal L_j} ^2  
+ \frac{2s^2}{2}  \sum_{1\le j<k\le L} \hat{\mathcal L_j} \hat{\mathcal L_k}  \notag\\
&\quad
+
\frac{s^3}{6}\sum_{j=1}^L \hat{\mathcal L}_j^3
+ 
s\frac{s^2}{2}\sum_{1\le j<k\le L}
\left(  \hat{\mathcal L}_j^2  \hat{\mathcal L_k} + \hat{\mathcal L_j}  \hat{\mathcal L}_k^2 \right)
\notag\\
&\quad +
s^3 \sum_{1\le j < k < l \le L} \hat{\mathcal L_j}  \hat{\mathcal L_k}  \hat{\mathcal L_l}  + \dots.\label{eq:trotter-forwards}
\end{align}
Similarly, 
\begin{align}
\hat{\mathcal T}_{s,\gets}
&=  
{\bf 1} + s \sum_{j=1}^L \hat{\mathcal L_j} + \frac{s^2}{2} \sum_{j=1}^L \hat{\mathcal L_j} ^2  
+ \frac{2s^2}{2}  \sum_{1\le k<j\le L} \hat{\mathcal L_j}
\hat{\mathcal L_k} \notag\\
&\quad
+
\frac{s^3}{6}\sum_{j=1}^L \hat{\mathcal L}_j^3
+ 
s\frac{s^2}{2}\sum_{1\le j<k\le L}
\left(  \hat{\mathcal L}_j^2  \hat{\mathcal L_k} + \hat{\mathcal L_j}  \hat{\mathcal L}_k^2 \right)
\notag\\
&\quad +
s^3 \sum_{1\le l < k < j \le L} \hat{\mathcal L_j}  \hat{\mathcal L_k}  \hat{\mathcal L_l}  + \dots.
\label{eq:trotter-backwards}
\end{align}
Since $\hat{\mathcal E}_s = \frac{1}{2}\left(
\hat{\mathcal T}_{s,\to}
+
\hat{\mathcal T}_{s,\gets} \right)$,
\eqref{eq:trotter-forwards} and  \eqref{eq:trotter-backwards} imply that
\begin{align}
 {2! \hat {\mathcal A}_2} &=  \sum_{j=1}^L \hat{\mathcal L_j} ^2   +  \sum_{j \neq k} \hat{\mathcal L_j} \hat{\mathcal L_k} , \\
{3! \hat {\mathcal A}_3}&=  
\sum_{j=1}^L \hat{\mathcal L}_j^3
+ 
\frac{6}{4}\sum_{j \neq k}
\left(  \hat{\mathcal L}_j^2  \hat{\mathcal L_k} + \hat{\mathcal L_j}  \hat{\mathcal L}_k^2 \right)  +
\frac{6}{2} 
\sum_{\substack{ 1\le j < k < l \le L\\ 1\le l < k < j \le L} }
 \hat{\mathcal L_j}  \hat{\mathcal L_k}  \hat{\mathcal L_l} .
\end{align}
Moreover, we know that 
\begin{align}
\mathcal L ^2 &= \sum_{j=1}^L \mathcal L_j^2  + \sum_{j \neq k } \mathcal L_j \mathcal L_k, \\
\mathcal L^3 &= \sum_{j=1}^L \mathcal L_j^3  + \sum_{j \neq k } 
\left( \mathcal L_j^2 \mathcal L_k + \mathcal L_j \mathcal L_k^2 + 
\mathcal L_j \mathcal L_k \mathcal L_j\right) 
\notag\\
&\quad + 
\sum_{\substack{ 1\le j < k < l \le L\\ 1\le l < k < j \le L} }
 {\mathcal L_j}   {\mathcal L_k}   {\mathcal L_l} 
 + 
\sum_{\substack{ 1\le k< j < l \le L\\ 1\le k < l < j \le L} }
 {\mathcal L_j}   {\mathcal L_k}   {\mathcal L_l} +  \sum_{\substack{ 1\le j < l < k \le L\\ 1\le l < j < k \le L} }
  {\mathcal L_j}   {\mathcal L_k}   {\mathcal L_l} .
\end{align}
Clearly $\mathcal L - \expectationinline{\hat { \mathcal L}} = 0 $.
Next note that 
\begin{align}
\mathcal L^2 -  \expectationinline{ {2! \hat {\mathcal A}_2}} =&
 \sum_{j=1}^L 
 \left( \mathcal L_j^2 - \expectationinline{ \hat{\mathcal L}_j^2 } \right)
   +  \sum_{j \neq k}
\left(   \mathcal L_j \mathcal L_k - \expectationinline{ \hat{\mathcal L}_j \hat{\mathcal L_k} } \right )
   \notag\\
=&
 \sum_{j=1}^L 
 \left( \mathcal L_j^2 - \expectationinline{ \hat{\mathcal L}_j ^2 } \right).
\end{align}
Now $ \expectationinline{\hat{\mathcal L}_j^2} = p_j \frac{\mathcal L_j^2}{p_j^2}$, which implies that for $0 < p_j \le 1$, we have
\begin{align}
\mathcal L^2 -  \expectationinline{ {2! \hat {\mathcal A}_2}} 
&= \sum_{j=1}^L 
 \left( 1-  \frac{1}{p_j }  \right)  \mathcal L_j ^2.
 \label{eq:diff-order2}
\end{align} 
Since $p_j \le 1$, we have
\begin{align}
\| \mathcal L^2 -  \expectationinline{ {2! \hat {\mathcal A}_2}} \|_\diamond 
\le
 \sum_{j=1}^L \left(  \frac{1}{p_j } - 1  \right) (4h_j^2).
\end{align}
Since $\mathcal B_2 = \mathcal L^2 / 2!$, we get the upper bound
\begin{align}
\|s^2 \mathbb E( \hat {\mathcal A}_2) - s^2 \mathcal B_2\|_\diamond 
&= \frac{s^2}{2!}
4\sum_{j=1}^L\left(  \frac{1}{p_j } - 1  \right) h_j^2\notag\\
&= 
2s^2
\sum_{j=1}^L\left(  \frac{1}{p_j } - 1  \right) h_j^2,
\end{align}
where $s = t \mu/G$.
 Multiplying this by $r=G/\mu$ gives us $\epsilon_1$.

To evaluate $\epsilon_2$, we proceed to write
\begin{align}
   \mathcal L^3 -  \expectationinline{ {3! \hat {\mathcal A}_3}} 
   &= 
   D_1 + D_2 + D_3 + D_4 + D_5,
\end{align}
where
\begin{align}
    D_1 &=  \sum_{j=1}^L \left( \mathcal L_j^3 -  \expectationinline{\hat{\mathcal L}_j^3}   \right),  \\
    D_2 &= \sum_{j \neq k }\left( \mathcal L_j^2 \mathcal L_k  - \expectationinline{ \hat{\mathcal L}_j^2 \hat{\mathcal L}_k} \right) ,\\
    D_3 &= \sum_{j \neq k } \left( \mathcal L_j \mathcal L_k^2 - \expectationinline{\hat{\mathcal L_j} \hat{\mathcal L}_k^2} \right),\\
    D_4 &= \sum_{j \neq k } \left( \mathcal L_j \mathcal L_k \mathcal L_j  - 
\frac{1}{2} \expectationinline{ \hat{\mathcal L}_j^2 \hat{\mathcal L_k} + \hat{\mathcal L_j} \hat{\mathcal L}_k^2} \right), \\
    D_5 &=\sum_{\substack{ 1\le k< j < l \le L\\ 1\le k < l < j \le L} }
 {\mathcal L_j}  {\mathcal L_k}  {\mathcal L_l}  
 + 
\sum_{\substack{ 1\le j < l < k \le L\\ 1\le l < j < k \le L} }
 {\mathcal L_j}  {\mathcal L_k}  {\mathcal L_l} -2 \sum_{\substack{ 1\le j < k < l \le L\\ 1\le l < k < j \le L} }
\expectationinline{ \hat{\mathcal L_j}  \hat{\mathcal L_k}  \hat{\mathcal L_l} } \label{eq:D5eq30}.  
\end{align}
We now proceed to simplify $D_j$ for $j=1,\dots, 5$.
Note that 
$\expectationinline{\hat{\mathcal L}_j^3} 
=p_j \frac{\mathcal L_j^3}{p_j^3}$.
This implies that
\begin{align}
    D_1 = \sum_{j=1}^L \left( 1-  \frac 1 {p_j^2}  \right) \mathcal L_j^3.
\end{align}
Next, multiplicativity of the expectation for independent random variables implies that
\begin{align}
    D_2 &= \sum_{j \neq k}  \left( 1- \frac 1 {p_j}  \right)
    \mathcal L_j^2 \mathcal L_k,
    \\
    D_3 &= \sum_{j \neq k}  \left(  1-\frac 1 {p_k}  \right)
    \mathcal L_j \mathcal L_k^2.
\end{align}
Now we can write
\begin{align}
        D_4 &= 
        \frac 1 2 \sum_{j \neq k }  \mathcal L_j \mathcal L_k \mathcal L_j  
        + \frac 1 2 \sum_{j \neq k }  \mathcal L_k \mathcal L_j \mathcal L_k - 
        \sum_{j \neq k } \left(
\frac{1}{2} 
\expectationinline{ \hat{\mathcal L}_j^2 \hat{\mathcal L_k} 
+ 
\hat{\mathcal L_j} \hat{\mathcal L}_k^2} \right) .\label{eq:A29}
\end{align}
Clearly, we have 
$\expectation{ \hat{\mathcal L}_j^2 \hat{\mathcal L_k} } = \mathcal L_j^2 \mathcal L_k/p_j$
and
$\expectation{ \hat{\mathcal L_j} \hat{\mathcal L}_k^2 } = \mathcal L_j \mathcal L_k^2/p_k$.
Next by swapping the roles of $j$ and $k$ in the summation, we get
\[ \sum_{j \neq k } \mathcal L_j \mathcal L_k^2/p_k
=
 \sum_{j \neq k }  \mathcal L_k \mathcal L_j^2/p_j.
\] 
By pairing the first term with the third term and the second term with the fourth term in \eqref{eq:A29}, this implies that  
\begin{align}
D_4 &=
\frac 1 2\sum_{j \neq k} 
\mathcal L_j \left(\mathcal L_k  \mathcal L_j - \mathcal L_j  \mathcal L_k /p_j  \right)
+
\frac 1 2 \sum_{j \neq k}
\left(
\mathcal L_j \mathcal L_k
-
\mathcal L_k \mathcal L_j / {p_k}
\right) \mathcal L_k ,
\end{align}
where we swap the roles of $j$ and $k$ in the second sum.
From the above, we can see that
\begin{align}
    \|D_1\|_\diamond &\le \sum_{j=1}^L 
    \left(  \frac 1 {p_j^2} -1 \right) 8 h_j^3,\\
    \|D_2\|_\diamond &\le \sum_{j \neq k}  \left(  \frac 1 {p_j} -1 \right) 8 h_j^2 h_k,\\
    \|D_3\|_\diamond &\le \sum_{j \neq k}  \left(  \frac 1 {p_k} -1 \right) 8 h_j h_k^2,\\
    \|D_4\|_\diamond &\le \sum_{j \neq k}  \left( 1 +  \frac 1 {p_j}  \right) 8 h_j^2 h_k.
\end{align}
From this, we can obtain the first two terms in $\epsilon_2$.
To see this, note that
\begin{align}
 \frac{1}{3!}\|D_1\|_\diamond 
 \le \frac{4}{3} \sum_{j=1}^L 
    \left(  \frac 1 {p_j^2} -1 \right)   h_j^3,\label{eq:D1-bound}
\end{align}
and
\begin{align}
   \frac{1}{3!}
   \sum_{i=2}^4
   \|D_i\|_\diamond   
   &\le
   \frac{4}{3} \sum_{j \neq k}  \left( 3 \frac 1 {p_j} -1  \right) h_j^2 h_k.\label{eq:D2,3,4-bound}
\end{align}
Multiplying the right sides of \eqref{eq:D1-bound} and \eqref{eq:D2,3,4-bound} by $r$ gives the first two terms in $\epsilon_2$.

We proceed to simplify $D_5$.
Note from \eqref{eq:D5eq30} that
\begin{align}
    D_5 = &
\sum_{\substack{ 1\le k< j < l \le L } }
 {\mathcal L_j}  {\mathcal L_k}  {\mathcal L_l}  + \sum_{\substack{   1\le k < l < j \le L} }
 {\mathcal L_j}  {\mathcal L_k}  {\mathcal L_l}  + 
\sum_{\substack{ 1\le j < l < k \le L } }
 {\mathcal L_j}  {\mathcal L_k}  {\mathcal L_l} \\ \nonumber
 & + \sum_{\substack{  1\le l < j < k \le L} }
 {\mathcal L_j}  {\mathcal L_k}  {\mathcal L_l} 
 -2 \sum_{\substack{ 1\le j < k < l \le L } }
  {\mathcal L_j}  {\mathcal L_k}  {\mathcal L_l}
 -2 \sum_{\substack{   1\le l < k < j \le L} }
 {\mathcal L_j}   {\mathcal L_k}   {\mathcal L_l}  .  
\end{align} 
Now by ordering all the indices in the same way we get
\begin{align}
    D_5 &=
\sum_{\substack{ 1\le j< k < l \le L } }
 {\mathcal L_k}  {\mathcal L_j}  {\mathcal L_l}  
+
\sum_{\substack{   1\le j < k < l \le L} }
 {\mathcal L_l}  {\mathcal L_j}  {\mathcal L_k}  
 + \sum_{\substack{ 1\le j < k < l \le L } }
 {\mathcal L_j}  {\mathcal L_l}  {\mathcal L_k}  
 \notag\\
& \quad + \sum_{\substack{  1\le j < k < l \le L} }
 {\mathcal L_k}  {\mathcal L_l}  {\mathcal L_j} 
 -2 \sum_{\substack{ 1\le j < k < l \le L } }
  \left(
  {\mathcal L_j}  {\mathcal L_k}  {\mathcal L_l}
  + 
   {\mathcal L_l}   {\mathcal L_k}   {\mathcal L_j} 
   \right).  \label{eq:A38}
\end{align} 
By pairing the first term with the fifth term, and the third term with the fifth term in \eqref{eq:A38}, we get
$\mathcal L_k \mathcal L_j \mathcal L_l - 
 \mathcal L_j \mathcal L_k \mathcal L_l
 = [\mathcal L_k, \mathcal L_j] \mathcal L_l$
and 
$\mathcal L_j \mathcal L_l \mathcal L_k - 
 \mathcal L_j \mathcal L_k \mathcal L_l
 = \mathcal L_j [\mathcal L_l, \mathcal L_k]$.
By pairing the second term with the sixth term, and the fourth term with the sixth term in \eqref{eq:A38}, we get
$\mathcal L_l \mathcal L_j \mathcal L_k - 
\mathcal L_l \mathcal L_k \mathcal L_j
 = \mathcal L_l[\mathcal L_j, \mathcal L_k]$
and 
$\mathcal L_k \mathcal L_l \mathcal L_j - 
 \mathcal L_l \mathcal L_k \mathcal L_j
 = [\mathcal L_k, \mathcal L_l]  \mathcal L_j$.
We can thus rewrite \eqref{eq:A38} as
\begin{align}
    D_5 &=
\sum_{\substack{ 1\le j< k < l \le L } }
\left( [\mathcal L_k, \mathcal L_j] \mathcal L_l
+ \mathcal L_j [\mathcal L_l, \mathcal L_k] + \mathcal L_l [\mathcal L_j , \mathcal L_k]
+[\mathcal L_k, \mathcal L_l] \mathcal L_j
\right)
\end{align} 
Collecting the terms in the above summation in terms of commutators again, we get
\begin{align}
    D_5 &=
\sum_{\substack{ 1\le j< k < l \le L } }
\left( 
[\mathcal L_l , [\mathcal L_j, \mathcal L_k] ]
+[[\mathcal L_k, \mathcal L_l], \mathcal L_j]
\right).
\label{eq:commutator-bound}
\end{align}
A trivial upper bound on the diamond norm of this is 
\begin{align}
    \| D_5\|_\diamond \le 8 \frac{8}{6} \mathcal S({\bf h},{\bf h},{\bf h}),
\end{align}
where the first factor of 8 arises from going from the diamond norm of the Liovillean $\mathcal L_j$ to the operator norm of $H_j$, and the numerator $8$ in the fraction arises from the total number of summations over non-decreasing indices, and 6 arises from the number of ways to permute the indices $j,k$ and $l$.
From the bounds we have on the diamond norms of $D_1,D_2,D_3,D_4$ and $D_5$, we obtain
\begin{align}
\frac{1}{3!}\left\| \mathcal L^3 -  \expectation{ {3! \hat {\mathcal A}_3}} \right\|_\diamond
   &\le 
   \frac{1}{6} \sum_{j=1}^5 \|D_j\|_\diamond
   = \epsilon_{2}/r.
\end{align}
Multiplying this by $r$ gives us the error in $\epsilon$ that is $\mathcal O(t^3\mu^2/G^2 )$.
We have thus completed bounding the leading order errors in Theorem~\ref{thm:main-complete}.

\subsection{Tail bounds in Theorem~\ref{thm:main} }
\label{subsec:proof of tail bounds}

Here, we explain how the tail bounds $\epsilon_{3,1}$ and $\epsilon_{3,2}$ in Theorem~\ref{thm:main-complete} arise.

To evaluate the higher order terms in $\epsilon$, we consider a convergent power series in $\theta$ given by
 $   \mathcal F_{\theta} = \sum_{k \ge 0 } f_k \theta^k.$
Here, $f_k$ is independent of $\theta$, 
and $\mathcal F_{\theta}$ and $f_k$ belong to a Banach algebra. Define $[\theta^j] \mathcal F_\theta = f_j$ as the $j$th coefficient in the power series expansion of $\mathcal F_\theta$. A useful technique to bound quantities in a Banach algebra relies on the fundamental theorem of calculus, and has been used for example in Ref~\cite{BrH13} and Ref~\cite{berry2019time}.
This for example can be used to obtain the well-known integral form of the remainder term of the Taylor series of the power series $\mathcal F_{s}$ where $s > 0$.
\begin{lemma}
\label{lem:calculus trick}
Let $\theta_0=1,s>0$ and $\mathcal F_s = \sum_{k \ge 0 } f_k s^k$.
For every positive integer $t$, we have 
\begin{align}
&\sum_{k \ge t } f_k s^k
= 
\int_{0}^{\theta_0} d\theta_1
\dots 
\int_{0}^{\theta_{t-1}} d\theta_t
\frac{d^t}{d\theta_t} \mathcal F_{s \theta_t}.\notag
\end{align}
\end{lemma}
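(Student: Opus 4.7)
The plan is to prove the lemma by expanding $\mathcal F_{s\theta_t}$ as a power series in $\theta_t$, applying term-by-term differentiation, and then evaluating the $t$-fold nested integral on the simplex $0 \le \theta_t \le \theta_{t-1} \le \cdots \le \theta_1 \le \theta_0 = 1$. Concretely, I would write $\mathcal F_{s\theta_t} = \sum_{k \ge 0} f_k s^k \theta_t^k$, so that differentiating $t$ times in $\theta_t$ annihilates every term with $k < t$ and multiplies the surviving $f_k s^k \theta_t^{k-t}$ by the falling factorial $k!/(k-t)!$. This produces a convenient shifted power series that is well-adapted to iterated integration from $0$.

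I would then iterate the integrations from the inside out. Each integration over $\theta_i \in [0,\theta_{i-1}]$ raises the exponent of the current dummy variable by one and contributes one reciprocal linear factor in the denominator. After all $t$ steps the accumulated denominator is $(k-t+1)(k-t+2)\cdots k = k!/(k-t)!$, which exactly cancels the falling factorial generated by the $t$-th derivative; the remaining power of $\theta_0 = 1$ disappears, and the right-hand side collapses to $\sum_{k \ge t} f_k s^k$, which is the claimed tail of the series. This bookkeeping is the only ``calculation'' in the proof and is elementary once set up correctly.

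The main obstacle, or rather the only non-trivial technical point, is justifying the interchange of the infinite sum with both the $t$-fold derivative and the $t$ nested integrals. The hypothesis is that $\mathcal F_s$ is a convergent power series valued in a Banach algebra; by absolute and uniform convergence on compact subsets of the interior of the disk of convergence, standard results for Banach-space-valued analytic functions permit differentiation term-by-term, and uniform convergence on $[0,1]$ permits the integrations to be exchanged with the sum as well. Once that interchange is in hand, the factorial cancellation above closes the argument.

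This lemma is essentially the Banach-algebra-valued integral form of the Taylor remainder. In the sequel, it would be applied with $\mathcal F_\theta$ taken to be the (random) channel $\hat{\mathcal E}_{s\theta} - e^{s\theta\mathcal L}$ (or, more precisely, the series that remains after the first three coefficients are subtracted, so that $t = 4$), producing an expression for the $\mathcal O(t^4 \mu^3 / G^3)$ tail contributions to $\epsilon$ in terms of the fourth derivatives of $\hat{\mathcal E}_{s\theta}$ and $e^{s\theta\mathcal L}$, which can then be bounded uniformly in $\theta \in [0,1]$ via $\|\mathcal L_j\|_\diamond \le 2 h_j$ and the triangle inequality to yield $\epsilon_{3,1}$ and $\epsilon_{3,2}$.
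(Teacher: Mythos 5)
Your proposal is correct and follows essentially the same route as the paper: term-by-term differentiation producing the falling factorial $k_{\underline{t}}=k!/(k-t)!$, followed by iterated application of the fundamental theorem of calculus on monomials over the simplex $0\le\theta_t\le\dots\le\theta_0=1$, with the accumulated denominators cancelling the falling factorial. Your explicit attention to justifying the interchange of the infinite sum with the derivatives and integrals is a point the paper leaves implicit, but it does not change the argument.
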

\begin{proof}
The proof of this is well-known but we provide the complete details for completeness. 
We first note that 
$\frac{d^t}{d\theta_t} \mathcal F_{s \theta_t} 
= \sum_{k \ge t} f_k (s \theta_t )^{k-t} 
k_{\underline t}$,
where $k_{\underline t} = (k)\dots(k-t+1)$ denotes the falling factorial.

Applying the fundamental theorem of calculus on monomials in $\theta_t$, we have
\begin{align}
\int_0^{\theta_{t-1}} d\theta_t  \frac{d^t}{d\theta_t} \mathcal F_{s \theta_t} =&
\int_0^{\theta_{t-1}} d\theta_t  \sum_{k \ge t} f_k (s \theta_t )^{k-t} k_{\underline t}
\notag\\
=&
 \sum_{k \ge t} f_k s^{k-t} \theta_{t-1}^{k-t-1} k_{\underline {t-1}}.\notag
\end{align}
Applying this argument iteratively gives the result.  
\end{proof}
Now let us denote a single timeslice of the ideal channel and \methodname for time $s\theta$ as $U_\theta = e^{s \theta \mathcal L} $ and ${\hat {\mathcal E}_{s\theta}} = 
\frac{1}{2} (e^{s \theta \hat {\mathcal L}_1 } \dots e^{s \theta \hat {\mathcal L}_L}
+
e^{s \theta \hat {\mathcal L}_L } \dots e^{s \theta \hat {\mathcal L}_1})$ respectively.
We proceed to evaluate the fourth derivatives of a single timeslice of $U_\theta$ and ${\hat V}_\theta$, which are respectively given by 
\begin{align}
\frac{d^4}{d\theta^4} U_\theta & = (s \mathcal L)^4 U_\theta  , \\
\frac{d^4}{d\theta^4} {\hat {\mathcal E}_{s\theta}} &= \frac{s^4}{2}  \!\!\!\!
\sum_{\substack{n_1 + \dots + n_L = 4\\ n_1,\dots, n_L\in \mathbb N}}\!\!
\binom{4}{n_1,\dots,n_L}
\hat {\mathcal L}_1 ^{n_1}
e^{s \theta \hat {\mathcal L}_1 } \dots 
\hat {\mathcal L}_L ^{n_L}
e^{s \theta \hat {\mathcal L}_L}
\notag\\
&+
\frac{s^4}{2}   \!\!\!\!
\sum_{\substack{n_1 + \dots + n_L = 4\\ n_1,\dots, n_L\in \mathbb N}}\!\!
\binom{4}{n_1,\dots,n_L}
\hat {\mathcal L}_L ^{n_L}
e^{s \theta \hat {\mathcal L}_L } \dots 
\hat {\mathcal L}_1 ^{n_1}
e^{s \theta \hat {\mathcal L}_1},
\end{align}
where in the second equation we used the general Leibniz rule and the $\binom{4}{n_1,\dots,n_L} = 4!/(n_1! \dots n_L!)$ denotes the multinomial coefficient.
The diamond norm of the tail of $U_\theta$ is therefore at most
\begin{align}
    \frac{s^4}{4!}\| \mathcal L^4 U_\theta \|_\diamond
    &\le
    \frac{s^4}{4!}\| \mathcal L^4 \|_\diamond \| U_\theta \|_\diamond
    \notag\\
    &\le
    \frac{(2s)^4}{4!} \lambda^4  \| U_\theta \|_\diamond\notag\\
    &=
    \frac{(2s \lambda)^4}{4!} ,
\end{align}
where the last equality arises because $U_\theta$ is a quantum channel.

Now note that by the linearity of the derivative and expectation operator,
\begin{align}
    \frac{d^4}{d\theta^4} \mathbb E({\hat E}_{s\theta})
=& \frac{s^4}{2}   \!\!\!\!
\sum_{\substack{n_1 + \dots + n_L = 4\\ n_1,\dots, n_L\in \mathbb N}}\!\!
\binom{4}{n_1,\dots,n_L}
\mathbb E(
\hat {\mathcal L}_1 ^{n_1}
e^{s \theta \hat {\mathcal L}_1 } \dots 
\hat {\mathcal L}_L ^{n_L}
e^{s \theta \hat {\mathcal L}_L})
\notag\\
&
+\frac{s^4}{2}   \!\!\!\!
\sum_{\substack{n_1 + \dots + n_L = 4\\ n_1,\dots, n_L\in \mathbb N}}\!\!
\binom{4}{n_1,\dots,n_L}
\mathbb E(
\hat {\mathcal L}_L ^{n_L}
e^{s \theta \hat {\mathcal L}_L } \dots 
\hat {\mathcal L}_1 ^{n_1}
e^{s \theta \hat {\mathcal L}_1}).
\end{align}
By the independence of every stochastic Trotter step, the expectation is multiplicative so that
\begin{align}
    \frac{d^4}{d\theta^4} \mathbb E({\hat E}_{s\theta})
 = &
\frac{s^4}{2}   \!\!\!\!
\sum_{\substack{n_1 + \dots + n_L = 4\\ n_1,\dots, n_L\in \mathbb N}}\!\!
\binom{4}{n_1,\dots,n_L}
\mathbb E(\hat {\mathcal L}_1 ^{n_1}
 e^{s \theta \hat {\mathcal L}_1 }) \dots 
\mathbb E(\hat {\mathcal L}_L ^{n_L}
 e^{s \theta \hat {\mathcal L}_L})
\notag\\
&
+\frac{s^4}{2}   \!\!\!\!
\sum_{\substack{n_1 + \dots + n_L = 4\\ n_1,\dots, n_L\in \mathbb N}}\!\!
\binom{4}{n_1,\dots,n_L}
\mathbb E(\hat {\mathcal L}_L ^{n_L}
 e^{s \theta \hat {\mathcal L}_L }) \dots 
\mathbb E(\hat {\mathcal L}_1 ^{n_1}
 e^{s \theta \hat {\mathcal L}_1}).
\end{align}
Using the triangle inequality and the submultiplicativity of the diamond norm, the diamond norm of the tail of $\hat {\mathcal E}_{s \theta}$ is at most 
\begin{align}
&
\sum_{\substack{n_1 + \dots + n_L = 4\\ n_1,\dots, n_L\in \mathbb N}}\!\!\!\!\!\!\!\!
{\frac{s^4\binom{4}{n_1,\dots,n_L}}{4!}}
\|\mathbb E(\hat {\mathcal L}_1 ^{n_1}
e^{s \theta \hat {\mathcal L}_1 })\|_\diamond \dots 
\|\mathbb E(\hat {\mathcal L}_L ^{n_L}
e^{s \theta \hat {\mathcal L}_L})\|_\diamond. \label{eq:use-multinomial-theorem-here}
\end{align}
When $n_j \ge 1$, we appeal to the Taylor series expansion for the exponential function, to get
\begin{align}
    \expectationinline{
    \hat {\mathcal L}_j ^{n_j} 
    e^{s \theta \hat {\mathcal L}_j }
    }
    &=
    \expectation{
    \sum_{k \ge 0}
    \frac{(s\theta)^k}{k!}\hat {\mathcal L}_j^{k+n_j}
    }
    \notag\\
    &=
    \sum_{k \ge 0}
    \frac{(s\theta)^k}{k!}
    \expectationinline{
    \hat {\mathcal L}_j^{k+n_j}
    }\notag\\
    &=
    \sum_{k \ge 0}
    \frac{(s\theta)^k}{k!} 
    {\mathcal L}_j^{k+n_j} / p_j^{k+n_j-1} 
    \notag\\
    &=
    p_j({\mathcal L}_j/p_j ) ^{n_j} 
    e^{s \theta  {\mathcal L}_j / p_j }.\label{expectation:random-product}
\end{align}
It is clear that \eqref{expectation:random-product} also holds when $n_j=0$.
Using \eqref{expectation:random-product}, we get the upper bound 
\begin{align}
     \|\expectationinline{
    \hat {\mathcal L}_j ^{n_j} 
    e^{s \theta \hat {\mathcal L}_j }
    }\|_\diamond
    \le p_j(\|{\mathcal L}_j\|_\diamond /p_j ) ^{n_j} .
    \label{neat-diamond-norm-identity}
\end{align}
Using \eqref{neat-diamond-norm-identity} with the multinomial theorem on \eqref{eq:use-multinomial-theorem-here}, the diamond norm of the tail of $\hat {\mathcal E}_{s \theta}$ is at most 
\begin{align}
    \frac{s^4(p_1 \dots p_L)}{4!}
    \left( \|\mathcal L_1\|_\diamond/p_1 + \dots + \|\mathcal L_L\|_\diamond/ p_L \right)^4.
\end{align}
Applying the identity $\|\mathcal L_j\|_\diamond \le 2 h_j$, we find that the diamond norm of the tail of $\hat {\mathcal E}_{s \theta}$ is at most 
\begin{align}
       \frac{(2s)^4(p_1 \dots p_L)}{4!}
    \left( h_1/p_1 + \dots + h_L/ p_L \right)^4.
\end{align} 
By setting $\theta = 1$ and multiplying the results that we obtained from the tail bounds on a single timeslice $s$ by a factor of $r$, we thus find that the
expected contribution to the simulation error from the tail bounds on $r$ repeats of $e^{s \mathcal L}$ and $\hat{\mathcal E}_s$ is at most $\epsilon_{3,1}$ and $\epsilon_{3,2}$ respectively, where
\begin{align}
\epsilon_{3,1} &=  \frac{2^4 rs^4}{4!} \lambda^4\notag\\
\epsilon_{3,2} &=  \frac{2^4 rs^4}{4!} (p_1 \dots p_L)\left(h_1/p_1 + \dots + h_L/ p_L) \right),
\end{align}
from which the result follows.

 \section{Convex programming on leading order error terms}
 \label{app:cvxopt}
Here we minimise the leading order term in the simulation error by optimizing over the probabilities $p_j$. In particular, by restricting our minimisation to only the leading order term of $\epsilon_1$ in $\epsilon$, 
we find that the leading order simulation error is
$r s^2 \sum_{j=1}^L h_j^2/ p_j $.

The way we find the optimal probability is by taking the derivative of the corresponding Lagrangian function, and thereby determine its turning points. If the primal and dual solutions are furthermore feasible and satisfy complementary slackness, then we know from the convexity of our problem that these primal and dual solutions are optimal for the primal and dual optimisation problems respectively. 

More formally, in convex optimisation theory, we know that a primal problem and its dual problem are both optimal if and only if (1) Slater's constraint qualification holds, and (2) the primal and dual variables satisfy the so-called Karush-Kuhn-Tucker (KKT) conditions \cite{nocedal2006numerical}. Of these two conditions (1) easily holds. The notion of the active set appears in one of the optimality conditions of (2), which is known as complementary slackness.

Now we only optimize over the $p_j$ for which $j$ belongs to the inactive set $\bar A$.
Hence we consider the optimisation problem 
\begin{mini} 
{p_j>0, j \in \bar A}{r s^2 \sum_{j \in \bar A} \frac{h_j^2}{p_j} }
{}{}
\addConstraint{\sum_{j \in \bar A} p_j }{=\bar \mu}
\addConstraint{p_j }{\le 1,}
\label{opt0}
\end{mini}
where $\bar \mu=\mu-|A|$.
Note that the objective function here is convex in $p_j$, 
and the constraint function is linear in $p_j$.
By treating $\bar \mu$ as a constant, 
we analytically derive the optimal value of this optimisation problem from the first order KKT conditions \cite{nocedal2006numerical}.
Since Slater's condition is satisfied, the KKT condition is necessary and sufficient for optimality. 
The KKT conditions require (1) the turning points of the Lagrangian to be zero, (2) primal feasibility, (3) feasibility of the Lagrange dual, (4) and complementary slackness. 
Complementary slackness requires the Lagrange multiplier of a constraint to be zero when that constraint is not tight. 

Denoting $u$ as the Lagrange multiplier for the equality constraint and $v_j$ as Lagrange multipliers for the inequality constraints, the Lagrangian of \eqref{opt0} is 
\begin{align}
{\rm L} 
&=  r s^2 \sum_{j \in \bar A} \frac{h_j^2}{p_j} + u \left(  \sum_{j \in \bar A} p_j   - \bar\mu  \right)  + \sum_{j \in \bar A} (p_j-1) v_j
\notag\\
&=   \sum_{j \in \bar A} \left( \frac{r s^2 h_j^2}{p_j} + (u+v_j) p_j - v_j  \right)  - u \bar \mu.
\end{align}
Note that
\begin{align}
\frac{\partial{\rm L} }{\partial p_j}
&= 
\frac{-r s^2 h_j^2}{p_j^2} + u +v_j,
\end{align}
and hence the turning point of the Lagrangian L occurs when 
\begin{align}
p_j = \frac{ \sqrt{r} s h_j }{\sqrt{u+v_j} }.
\end{align}
Note that we have $u \in \mathbb R$ and $v_j\ge 0$.
From complementary slackness, we know that if the optimal $p_j<1$, then we correspondingly have $v_j=0$. When $p_j=1$, the constraint corresponding to $v_j$ is active, and $v_j >0$.
Hence it follows that whenever $p_j < 1$, we have
\begin{align}
\sqrt{u} p_j = \sqrt{r} s h_j. 
\end{align}
Conversely, when $p_j =1$, we have
\begin{align}
u+v_j = {r s^2 h_j^2}.
\end{align} 
Note here that we have not verified that the problem is primal feasible, namely, that we need to check that $\sum_{j \in \bar A} p_j = \bar \mu$.
This can be satisfied whenever we have
\begin{align}
\bar \mu =s \sqrt{\frac r u} \sum_{j \in \bar A} h_j.
\end{align} 
Thus our ansatz for $p_j$ is 
\begin{align}
p_j &=    \frac{\bar\mu h_j}{  \sum_{j \in \bar A} h_j }  ,
\end{align}
with the regularity condition that this formula satisfies $p_j< 1$ for $j  \in \bar A$.

\end{document}